\newtheorem{thm}{Theorem}[section]
\newtheorem{lem}{Lemma}[section]
\newtheorem{cor}{Corollary}[section]
\newtheorem{prop}{Proposition}[section]
\numberwithin{equation}{section}
\def\C{\mathbb C}
\def\D{\mathbb D}
\def\R{I\!\!R}
\def\H{I\!\!H}
\def\C{I\!\!\!\!C}
\def\D{I\!\!D}
\title{ WAVE KERNELS WITH
MAGNETIC FIELD ON THE HYPERBOLIC PLANE AND WITH THE MORSE POTENTIAL ON THE REAL LINE}
\author{Mohamed Vall Ould Moustapha}
\begin{document}
\maketitle
\begin{abstract}
In this article we give explicit solutions for the wave equations associated to the modified Schr\"odinger operators with magnetic field on the disc and the upper half plane models of the hyperbolic plane.  We show that the modified  Schr\"odinger operator with magnetic field on the upper half plane model and the Schr\"odinger operator with diatomic molecular Morse potential on $\R$ are related by means of one-dimensional Fourier transform. Using this relation we give the explicit forms of the wave kernels associated to the Schr\"odinger operator with the diatomic molecular  Morse potential on $\R$ in terms of the two variables confluent hypergeometric function $\Phi_1$. 
\end{abstract}
 
\section{Introduction}
 
The aim of this
paper is to give explicit solutions to the wave equations associated to the modified Schr\"odinger operators with magnetic field on the disc and the upper half plane models of the hyperbolic plane. Using a formula relating the Schr\"odinger operator with with magnetic field on the hyperbolic plane and the Schr\"odinger operator with Morse potential  on the real line we give an exact formula for wave kernel with the Morse potential. That is we obtain the explicit solutions for the following
Cauchy problems of the wave type:
\begin{align}\label{cauchy-D} \left \{\begin{array}{cc}{\cal D}_{k} u(t, w)=\frac{\partial^2}{\partial
t^2}u(t, w)
, (t, w)\in \R^\ast_+\times \D \\ u(0, w)=0, u_t(0, w)=u_1(w), u_1\in
C^\infty_0(\D)\end{array}
\right. \end{align}

\begin{align}\label{cauchy-H} \left \{\begin{array}{cc}\widetilde{{\cal D}}_k \widetilde{u}(t, z)=\frac{\partial^2}{\partial
t^2}\widetilde{u}(t, z)
,(t, z)\in \R^\ast_+\times \H \\  \widetilde{u}(0, z)=0, \widetilde{u}_t(0, z)=\widetilde{u}_1(z), \widetilde{u}_1\in
C^\infty_0(\H)\end{array}
\right. \end{align}

\begin{align}\label{cauchy-M} \left \{\begin{array}{cc}\Lambda^{\lambda, k}_X w(t, X)=\frac{\partial^2}{\partial
t^2}w(t, X)
,(t, X)\in \R^\ast_+\times \R\\ w(0, X)=0, w_t(0, X)=w_1(X), w_1\in
C^\infty_0(\R)\end{array}
\right. \end{align}

 The modified Schr\"odinger operators with magnetic field on the hyperbolic disc $\D$  is given by
\begin{align}{\cal D}_{k}= {\cal L}_{k}^{\D}+k^2+\frac{1}{4}\end{align} 
where ${\cal L}_{k}^{\D}$ is the  Schr\"odinger operator with constant magnetic field on the hyperbolic disc $\D$ given in \cite{FER-VESEL} by
 \begin{align}{\cal L}_{k}^{\D}=(1-|w|^2)^2\frac{\partial^2}{\partial w \partial \overline{w}}+k(1-|w|^2)w\frac{\partial}{\partial w}& \nonumber\\
-k(1-|w|^2)\overline{w}\frac{\partial}{\partial \overline{w}}-k^2|w|^2\end{align}

 The modified Schr\"odinger operators with magnetic field on the hyperbolic half plane $\H$  is given by
\begin{align}\widetilde{{\cal D}}_{k}=y^2\left(\frac{\partial^2}{\partial x^2}+\frac{\partial^2}{\partial y^2}\right)+ 2 i k y\frac{\partial}{\partial x}+ \frac{1}{4}\end{align}
which can be written as
\begin{align}\widetilde{{\cal D}}_k= {\cal L}_{k}^{\H}+k^2+\frac{1}{4}\end{align} 
where ${\cal L}_{k}^{\H}$ is the  Schr\"odinger operator with constant magnetic field on the hyperbolic upper half plane $\H$ given in \cite{IKEDA} by
\begin{align}{\cal L}_{k}^{\H}=-(z-\overline{z})^2)\frac{\partial^2}{\partial z\partial\overline{z}}+k(z-\overline{z})\left(\frac{\partial}{\partial z}+\frac{\partial
}{\partial \overline{z}}\right)-k^2\end{align}
the Schr\"odinger operator with the Morse potential is given for $X\in \R$ by \cite{IKEDA}
\begin{align}\Lambda^{\lambda, k}_X=\frac{\partial^2}{\partial X^2}-2k\lambda e^X -\lambda^2 e^{2X}\end{align}
or equivalently for $y \in \R^+$ by
\begin{align}\Lambda^{\lambda, k}_{\ln y}=\left(y\frac{\partial}{\partial y}\right)^{2}-2k\lambda y -\lambda^2 y^2\end{align}

The importance of the  Schr\"odinger operator with magnetic field and the Schr\"odinger operator with the Morse potential in both theory and application of mathematics and physics may be found in literature \cite{FAY}, \cite{IKEDA}, \cite{MORSE}, \cite{ TASS}. For example
the operators ${\cal L}_{k}^{\D}$(resp.  ${\cal L}_{k}^{\H}$)   has a physical interpretation as being the Hamiltonian which governs a non relativistic charged particle
moving under the influence of the magnetic field of constant strength $|k|$ perpendicular to $\D$ (resp.$\H$).
Also the purely vibrational levels of diatomic
molecules with angular momentum $l=0$ have been described by the Morse potential since $1929$ see \cite{MORSE}.\\
Note that for $\lambda, k \in \R$ the Magnetic Laplacians ${\cal D}_{k}$, $\widetilde{{\cal D}}_{k}$   and the Schr\"odinger operator with the diatomic molecular Morse potential $\Lambda^{\lambda, k}$ are,
non-positive, definite
and each of them has an absolute continuous spectrum as well as a points spectrum if $|k|\ge 1/2$ see \cite{B-I}.
For a recent work on the Morse Potential see \cite{A-B-M, FID, HASS-ZARE, ZNOJIL}


\section{Explicit solutions for the wave equation with magnetic field on the hyperbolic disc}
This section is devoted to the linear wave equation associated to the modified Schr\"odinger operator with magnetic field  ${\cal D}_{k}$ on the hyperbolic disc  $\D$.\\ 
Let
  $\D=\{w\in \C, |w|<1\}$ be the unit disc endowed with the metric $ds$
\begin{align} ds^{2}=4
\frac{|dw|^2}{(1-|w|^2)^2}.\end{align}
Then the Rimannian manifold $(\D, ds)$ is the (conformal) Poincare disc model of the hyperbolic plane. The metric $ds$ is invariant with respect to the group \\
\begin{align} SU(1, 1)=\left\{ \left(
\begin{array}{cc}
A & \bar{B}\\
B & \bar{A}
\end{array}
\right) A, B\in \C: |A|^2-|B|^2=1\right\}.\end{align}
 The hyperbolic surface form $d\mu(w)$ is given by
\begin{align}d\mu(w)=4(1-|w|^2)^{-2}dX dY.\end{align}
The hyperbolic distance $d(w, w')$  associated to $ ds $ is
\begin{align}\label{dist1}\cosh^2 (d(w, w')/2)=\frac{|1-w\overline{w'}|^2}{(1-|w|^2)(1-|w'|^2)}.\end{align}
The associated Laplace Beltrami operator is
 \begin{align}\Delta=\left(1-|w|^2\right)\frac{\partial^2}{\partial w\partial \bar{w}}.\end{align}

\begin{prop}\label{prop1}\cite{B-I}
For $k$ real number, we consider the projective representation $T^{k}$ of the group $G=SU(1, 1)$ on $\C^\infty(\D)$ defined by
\begin{align}T^{k}(g)f(z)=\left(\frac{\overline{Cw+D}}{Cw+D}\right)^{k}f(g^{-1
}w))\end{align}
where $g^{-1
}=\left(
\begin{array}{cc}
A & B\\
B & A
\end{array}\right)\in SU(1, 1).$\\
 i)The action $T^{k}$ is a unitary projective representation of the group $G$ on the Hilbert space $L^2(\D)=L^2(\D, d\mu)$.\\
ii) The Laplacian ${\cal D}_{k}$ is $T^{k}$ invariant, that is we have
\begin{align}\label{invar1}T^{k}(g){\cal D}_{k}={\cal D}_{k}T^{k}(g)\end{align}
 for every $g\in G$.\\
  Let \begin{align}\label{gw}g_w=\left(
\begin{array}{cc}
\frac{1}{\sqrt{1-|w|^2}} &
\frac{ w}{\sqrt{1-|w|^2}}\\
\frac{\bar{w}}{\sqrt{1-|w|^2}} & 
\frac{1}{\sqrt{1-|w|^2}}
\end{array}
\right)\end{align} then:\\
iii)
$g_w\in G$ and we have $g_w0=w$.\\
iv)
\begin{align}\label{Tkw}\left[T^{k}(g_{w'})f\right](w)=\left(\frac{1-\overline{w}w'}{1-w\overline{w'}}\right)^k f(g_{w'}^{-1
}w).\end{align}
 \end{prop}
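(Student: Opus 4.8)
The plan is to dispatch the four assertions in turn, with part (ii) carrying the analytic weight. For (i), I would first note that the automorphy factor is unimodular: since $\overline{Cw+D}$ and $Cw+D$ are complex conjugates, $\bigl|\bigl(\tfrac{\overline{Cw+D}}{Cw+D}\bigr)^{k}\bigr|=1$ pointwise for real $k$, so $|T^{k}(g)f(w)|=|f(g^{-1}w)|$. Unitarity then reduces to the $G$-invariance of $d\mu(w)=4(1-|w|^2)^{-2}\,dX\,dY$, which is the standard statement that $d\mu$ is the hyperbolic area form and every $g\in SU(1,1)$ acts as an isometry of $(\D,ds)$; the change of variables $w\mapsto g^{-1}w$ in $\int_\D|T^{k}(g)f|^{2}\,d\mu$ then gives $\|T^{k}(g)f\|_{2}=\|f\|_{2}$. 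The term \emph{projective} is justified by verifying a multiplier identity $T^{k}(g_1)T^{k}(g_2)=\omega(g_1,g_2)\,T^{k}(g_1g_2)$, where the unimodular cocycle $\omega$ arises solely from the branch ambiguity of the non-integer power $(\cdot)^{k}$; invertibility of each $T^{k}(g)$ then completes (i).

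For (ii), since ${\cal D}_k={\cal L}_k^{\D}+k^2+\tfrac14$ and scalars commute with every operator, it suffices to prove $T^{k}(g){\cal L}_k^{\D}={\cal L}_k^{\D}T^{k}(g)$. I see two routes. The direct one substitutes $T^{k}(g)f(w)=\phi_g(w)^{k}\,f(g^{-1}w)$, with $\phi_g(w)=\overline{Cw+D}/(Cw+D)$, into ${\cal L}_k^{\D}$ and applies the chain rule: because the M\"obius map $g^{-1}$ is holomorphic, $\partial_w$ and $\partial_{\bar w}$ transform cleanly, and one checks that the first-order terms $k(1-|w|^2)w\,\partial_w-k(1-|w|^2)\bar w\,\partial_{\bar w}$ together with $-k^2|w|^2$ exactly absorb the derivatives that fall on the factor $\phi_g(w)^{k}$. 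The more economical route is infinitesimal: differentiating the identity at $g=e$ and invoking connectedness of $G$ reduces the claim to the commutation of ${\cal L}_k^{\D}$ with the three first-order operators $dT^{k}(\mathsf X)$, $\mathsf X$ running over a basis of $\mathfrak{su}(1,1)$ (the compact generator and two hyperbolic ones). Conceptually the cleanest explanation is gauge-theoretic: ${\cal L}_k^{\D}$ is the magnetic (Bochner) Laplacian of a connection whose curvature is the constant magnetic field, $G$ preserves that field while shifting the connection one-form by an exact term, and $\phi_g(w)^{k}$ is precisely the gauge factor implementing that shift, whence $G$-equivariance. I expect this to be the main obstacle, and I would carry it out by the Lie-algebra reduction to keep the bookkeeping minimal.

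For (iii), writing $g_w=\begin{pmatrix} a & b \\ \bar b & a \end{pmatrix}$ with $a=(1-|w|^2)^{-1/2}$ real and $b=w(1-|w|^2)^{-1/2}$, membership in $G$ follows from $|a|^2-|b|^2=(1-|w|^2)/(1-|w|^2)=1$, and the fractional-linear action gives $g_w\cdot 0=b/\bar a=w$. For (iv), I specialize $g=g_{w'}$ and compute the two ingredients separately. From $g_{w'}^{-1}=\begin{pmatrix} a & -b \\ -\bar b & a \end{pmatrix}$ the M\"obius image is $g_{w'}^{-1}w=\dfrac{aw-b}{a-\bar b w}=\dfrac{w-w'}{1-\overline{w'}w}$, matching the argument of $f$. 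The automorphy factor uses the lower row $(C,D)=(-\bar b,a)$ of $g_{w'}^{-1}$, so $Cw+D$ is proportional to $1-\overline{w'}w$ and $\overline{Cw+D}$ to $1-w'\bar w$; the common factor $(1-|w'|^2)^{-1/2}$ cancels in the quotient, yielding $\bigl(\tfrac{1-\bar w w'}{1-w\overline{w'}}\bigr)^{k}$ and hence the stated identity.
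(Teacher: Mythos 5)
The paper offers no proof of this proposition at all: it is stated with a citation to \cite{B-I} and the text proceeds directly to the next lemma, so there is no argument of the paper's to compare yours against, and your proposal fills a gap the author delegated to the literature. Judged on its own merits it is essentially correct. Parts (iii) and (iv) are complete computations: the determinant identity, the inverse $g_{w'}^{-1}$, the M\"obius image $(w-w')/(1-\overline{w'}w)$, and the cancellation of the common factor $(1-|w'|^2)^{-1/2}$ in the automorphy quotient all reproduce \eqref{Tkw} exactly; note also that your reading of $(C,D)$ as the bottom row of $g^{-1}$ is the only one consistent with \eqref{Tkw}, which quietly repairs the evident misprint in the paper's displayed matrix for $g^{-1}$. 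In (i) the reduction of unitarity to unimodularity of the factor plus $SU(1,1)$-invariance of $d\mu$ is the standard and correct route; the one point you assert rather than verify is that the multiplier $\omega(g_1,g_2)$ arising from the branch of the non-integer power is an honest constant --- a priori it is only locally constant as a function of $w$, and you should remark that connectedness of $\D$ promotes it to a single unimodular scalar. Part (ii), which you rightly flag as the analytic core, is where your text is a plan rather than a proof: you list three viable strategies (direct chain-rule substitution, the infinitesimal reduction to commutation with a basis of $\mathfrak{su}(1,1)$, and the gauge-theoretic reading of $\phi_g(w)^k$ as the gauge factor for the magnetic connection, which is indeed the conceptual content behind the factorization in \cite{FER-VESEL}), but the decisive claim --- that the first-order terms of ${\cal L}_k^{\D}$ exactly absorb the derivatives falling on $\phi_g(w)^k$ --- is precisely the proposition and is asserted, not carried out. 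Since the scalar summand $k^2+\tfrac14$ trivially commutes, your reduction to ${\cal L}_k^{\D}$ is fine, and any of your three routes does close the argument; to make this a complete proof you should execute one of them, the Lie-algebra version being the cheapest since it only requires checking three first-order commutators.
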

\begin{lem} \label{lem1}
If    $u\in C^\infty (\mathbb{R}^+\times\D)$ be a radial function in the second variable and let  $\Phi \in C^\infty (\mathbb{R}^+\times\R^+)$  such that\\
  $u(t, w) =v(t, r)= \Phi(x, y)$ with  $x= \cosh^2(t/2)$, $y = \cosh^2(r/2)$ and $ r = d(0, w)$ then we have :  \\
i) ${\cal D}_{k} u(t, w) = l_{y}^k \Phi(x, y)$ with : \\
\begin{align}l^{k}_y \Phi(x, y) = \left[y(y-1)\frac{d^2}{dy^2}+(2y-1)\frac{d}{dy}
 +\frac{k^2}{y}+\frac{1}{4}\right]\Phi(x, y)\end{align}
ii) Setting $\Phi(x, y)=y^{-1/2}\Psi(x, y)$  we have\\
 $y^{1/2}l_{y}^ky^{-1/2}\Psi(x, y)=J_k \Psi(x, y)$ with
\begin{align}J_k \Psi(x, y)=[y(y-1)\frac{d^2}{dy^2}+y\frac{d}{dy}
 -\frac{1-4 k^2}{4 y}]\Psi(x, y)\end{align}
 iii) $\frac{d^2}{d t^2}u(t, w)=I_x \Phi(x, y)$ where\\
\begin{align} I_x\Phi(x, y)=[x(x-1)\frac{d^2}{dx^2}+(x-1/2)\frac{d}{dx}]\Phi(x, y)\end{align}

 \end{lem}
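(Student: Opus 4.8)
The plan is to verify all three identities by direct computation, since each is a change of variables combined with (in part ii) a conjugation. The key structural simplification, which makes part i) tractable, is that for a radial function the two first order magnetic drift terms of ${\cal L}_{k}^{\D}$ cancel identically.

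First I would pass to polar coordinates $w=\rho e^{i\theta}$ and record that, by (\ref{dist1}) with $w'=0$, one has $y=\cosh^2(r/2)=(1-\rho^2)^{-1}$, so that a radial $u$ is a function of $\rho$ alone. Writing $\rho=\sqrt{w\overline{w}}$ gives the Wirtinger derivatives of any radial $f=f(\rho)$:
\begin{align}
\frac{\partial f}{\partial w}=\frac{\overline{w}}{2\rho}f'(\rho),\qquad \frac{\partial f}{\partial \overline{w}}=\frac{w}{2\rho}f'(\rho).
\end{align}
From these, $w\,\partial_w f=\overline{w}\,\partial_{\overline{w}}f=\tfrac{\rho}{2}f'(\rho)$, so the terms $k(1-\rho^2)w\,\partial_w f$ and $-k(1-\rho^2)\overline{w}\,\partial_{\overline{w}}f$ cancel, while a short computation gives $\partial_w\partial_{\overline{w}}f=\tfrac14\big(f''(\rho)+\tfrac1\rho f'(\rho)\big)$. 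Hence on radial functions
\begin{align}
{\cal D}_{k} f=\frac{(1-\rho^2)^2}{4}\Big(f''+\frac{1}{\rho}f'\Big)+k^2(1-\rho^2)f+\frac{1}{4}f.
\end{align}

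Next I would change variables $\rho\mapsto y=(1-\rho^2)^{-1}$, for which $dy/d\rho=2\rho y^2$, $1-\rho^2=1/y$ and $\rho^2=(y-1)/y$. Using $f''+\tfrac1\rho f'=\tfrac1\rho\,\tfrac{d}{d\rho}\!\big(\rho\,\tfrac{df}{d\rho}\big)$ together with $\rho\,\tfrac{df}{d\rho}=2y(y-1)\tfrac{df}{dy}$, the radial Laplacian becomes $4y^3(y-1)\partial_y^2+4y^2(2y-1)\partial_y$; multiplying by $\tfrac{(1-\rho^2)^2}{4}=\tfrac1{4y^2}$ produces exactly $y(y-1)\partial_y^2+(2y-1)\partial_y$, while the remaining terms become $\tfrac{k^2}{y}+\tfrac14$. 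This yields $l_y^k\Phi$ and proves i). For ii) I would simply conjugate: since $y^{1/2}\partial_y y^{-1/2}=\partial_y-\tfrac1{2y}$ and $(y^{1/2}\partial_y y^{-1/2})^2=\partial_y^2-\tfrac1y\partial_y+\tfrac{3}{4y^2}$, substitution into $l_y^k$ leaves the leading coefficient $y(y-1)$ unchanged, turns the first order part into $y\partial_y$, and collects the zeroth order terms over the common denominator $4y$ into $-\tfrac{1-4k^2}{4y}$, giving $J_k\Psi$.

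Finally, part iii) is a one variable chain rule in $t$, with $y$ held fixed. From $x=\cosh^2(t/2)$ one gets $dx/dt=\cosh(t/2)\sinh(t/2)$, hence $(dx/dt)^2=x(x-1)$ and, using $\cosh t=2x-1$, $d^2x/dt^2=x-\tfrac12$. Therefore $\partial_t^2 u=(dx/dt)^2\Phi_{xx}+(d^2x/dt^2)\Phi_x=x(x-1)\Phi_{xx}+(x-\tfrac12)\Phi_x=I_x\Phi$. The only genuinely delicate step is part i): one must compute the Wirtinger second derivative correctly and track the change of variables without sign errors; the cancellation of the magnetic drift on radial functions and the elementary identity $dy/d\rho=2\rho y^2$ are what keep the computation clean.
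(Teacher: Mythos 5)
Your proof is correct and follows essentially the same route as the paper: reduce ${\cal D}_k$ to its radial part and then change variables to $y=\cosh^2(r/2)$, $x=\cosh^2(t/2)$. The only difference is presentational — you work in the Euclidean radius $\rho=\tanh(r/2)$ with explicit Wirtinger calculus (so $y=(1-\rho^2)^{-1}$) instead of the geodesic coordinate $r$, and you supply the details the paper omits, namely the cancellation of the magnetic drift terms on radial functions and the conjugation computation for part ii), which the paper leaves to the reader.
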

\begin{proof} Using the geodesic polar coordinates
$w=\tanh r/2\, \omega$, $r> 0$ and $\omega\in S^{1}$ we see that the radial part of the Magnetic Laplacian ${\cal D}_{k}$
 is given by
\begin{align}D_{k}=\frac{\partial^{2}}{\partial r^{2}}+\coth r \frac{\partial}{\partial r}+\frac{k^2 }{\cosh^2(r/2)}+\frac{1}{4}\end{align}
using the variables changes $y=\cosh^2(r/2)$, $x=\cosh^2(t/2)$  we get the result of i) and iii). The result ii) is simple and  is left to the reader.
\end{proof}
\begin{thm}\label{thm1} The wave equation in \eqref{cauchy-D} associated to the modified Schr\"odinger operators with magnetic field ${\cal D}_{k}$ on the hyperbolic disc $\D$
 has the solution
 \begin{align}\label{v}v_k(t,  r(0, w))=\frac{1}{2\pi}\left(\cosh^2(t/2) -\cosh^2(r/2))\right)_+^{-1/2}\nonumber\\ 
F\left(|k|, -|k|, \frac{1}{2}, 1-\frac{\cosh^2 (t/2)}{\cosh^2 (r/2)}\right)\end{align}
where  $F(a, b, c, z)$ is the Gauss hypergeometric function defined by:
\begin{align}
  F(a, b, c, z)=\sum_{n=0}^{\infty}\frac{(a)_n(b)_n}{(c)_n n!}z^n,
  \quad |z|<1,
\end{align}
where as usual $(a)_n$ is the Pochhamer symbol
  $(a)_n=\frac{\Gamma(a+n)}{\Gamma(a}$
and $\Gamma$ is the classical Euler function.
\end{thm}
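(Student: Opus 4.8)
The plan is to work entirely in the reduced radial variables supplied by Lemma~\ref{lem1}. Writing the sought radial solution as $u(t,w)=\Phi(x,y)$ with $x=\cosh^2(t/2)$ and $y=\cosh^2(r/2)$, $r=d(0,w)$, parts i) and iii) of Lemma~\ref{lem1} convert the wave equation in \eqref{cauchy-D} into the single two–variable hyperbolic equation $I_x\Phi=l_y^k\Phi$ on the quadrant $\{x\ge 1,\ y\ge 1\}$; by the $T^k$–invariance of Proposition~\ref{prop1} it suffices to build the radial kernel based at $w=0$ and then convolve. Thus the whole theorem amounts to checking that $\Phi(x,y)=\frac{1}{2\pi}(x-y)_+^{-1/2}F(|k|,-|k|,\frac12,1-\frac{x}{y})$ solves $I_x\Phi=l_y^k\Phi$ (as a distribution on the quadrant) together with the two Cauchy conditions.

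First I would strip off the weight using Lemma~\ref{lem1} ii): the substitution $\Phi=\frac{1}{2\pi}y^{-1/2}\Psi$ turns $l_y^k$ into $y^{-1/2}J_k$ while leaving $I_x$ intact, so the equation becomes $I_x\Psi=J_k\Psi$. Introducing the similarity variable $z=1-x/y$ (whence $x-y=-yz$ and $(x-y)^{-1/2}=y^{-1/2}(-z)^{-1/2}$) collapses the candidate to a function of $z$ alone, namely $\Psi=\frac{1}{2\pi}(-z)^{-1/2}F(|k|,-|k|,\frac12,z)$. Expressing the $x$– and $y$–derivatives through $z$ and collecting terms, the equality $I_x\Psi=J_k\Psi$ reduces—after cancelling a common factor $y$—to the single ordinary differential equation $z(1-z)\Psi''+(\frac32-2z)\Psi'-(\frac14-k^2)\Psi=0$, Gauss's equation with $c=\frac32$, $a+b=1$, $ab=\frac14-k^2$, i.e. $\{a,b\}=\{\frac12+|k|,\frac12-|k|\}$. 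Substituting $\Psi=(-z)^{-1/2}w$ and using that $w=F(|k|,-|k|,\frac12,z)$ solves $z(1-z)w''+(\frac12-z)w'+k^2w=0$ then verifies that $(-z)^{-1/2}F(|k|,-|k|,\frac12,z)$ is indeed a solution, which is exactly what pins down the hypergeometric shape appearing in the statement.

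It remains to match the initial data, and this is where I expect the real work to lie. At $t=0$ one has $x=1\le y$ with equality only at $r=0$, so the support factor $(x-y)_+^{-1/2}$ makes $v_k(0,r)\equiv0$ for $r>0$, which gives the first Cauchy condition immediately. For the second, the genuine obstacle is the distributional analysis of the kernel across the light cone $t=r$: as $t\to0^+$ the support $\{r<t\}$ contracts to the origin and the argument $1-x/y\to0$, so $F(|k|,-|k|,\frac12,1-\frac{x}{y})\to F(|k|,-|k|,\frac12,0)=1$ and $v_k$ becomes asymptotic to the unperturbed kernel $\frac{1}{2\pi}(\cosh^2(t/2)-\cosh^2(r/2))_+^{-1/2}$. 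One must then show carefully that $\partial_t v_k(t,\cdot)\,d\mu$ converges to the Dirac mass $\delta_0$ at the origin of $\D$—controlling the weak singularity $(x-y)_+^{-1/2}$ and fixing the normalization $\frac{1}{2\pi}$—so that, through the convolution representation of $u$, the condition $u_t(0,\cdot)=u_1$ is recovered. The derivative bookkeeping leading to the hypergeometric ODE is routine; it is this limit at $t\to0^+$, together with the correct interpretation of the $\cdot_+$ distribution, that constitutes the main difficulty.
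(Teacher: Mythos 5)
Your proposal is correct and follows essentially the same route as the paper: reduce via Lemma~\ref{lem1} to $I_x\Phi=l_y^k\Phi$, strip the weight with $\Phi=y^{-1/2}\Psi$, pass to the similarity variable $z=1-x/y$, and recognize the resulting hypergeometric equation with $c=\tfrac32$, $\{a,b\}=\{\tfrac12\pm|k|\}$, whose second Kummer solution $z^{-1/2}F(|k|,-|k|,\tfrac12,z)$ gives \eqref{v} (your $(-z)^{-1/2}$ bookkeeping is in fact cleaner than the paper's, since $z<0$ on the support). The initial-data analysis you flag as the remaining ``real work'' is not part of this theorem in the paper's organization: it is deferred to Lemma~\ref{lem2} and Theorem~\ref{thm3}, so for Theorem~\ref{thm1} itself your first two paragraphs already constitute the complete argument.
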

\begin{proof} Using Lemma~\ref{lem1} the wave equation ${\cal D}_{k} v(t,  r(0, w)) =  \frac{d^2}{d t^2}v(t,  r(0, w))$
 is equivalent to $l_y^{k} \Phi(x, y)=I_x \Phi(x, y)$ which is equivalent to
$J^k_y \Psi(x, y)= I_x \Psi(x, y)$ with $\Phi(x, y)=y^{-1/2}\Psi(x, y)$. Setting $z=1-x/y$ we see that the last equation is equivalent to 
\begin{align}\frac{1}{y}\left[z(1-z)\frac{d^2}{dz^2}+\left(3/2-2z\right)\frac{d}{dz}-\frac{1-4k^2}{4}\right]\varphi(z)=0\end{align}
and this is an hypergeometric equation with parameters $a=\frac{1}{2}-|k|, b=\frac{1}{2}+|k|, c=3/2$
 and an appropriate solution is see\cite{MAGN} p.42
\begin{align}\varphi_k(z)=\frac{1}{2\pi}z^{-1/2}F\left(|k|,-|k|, 1/2, z\right)\end{align}
that is 
\begin{align}v_k(t,  r(0, w))=y^{-1/2}\varphi_k(1-x/y) \end{align}
and the proof of the theorem ~\ref{thm1} is finished.\\
\end{proof}
\begin{lem}\label{lem2} If $u\in C^\infty_0(\D)$ and let $v_{k}(t, r(w, 0))$ be the function given in \eqref{v} then we have\\
i) $\lim_{t\longrightarrow 0}\int_{r(w, 0)<t}v_{k}(t, r(w, 0))u(w)d\mu(w)=0$\\
ii) $\lim_{t\longrightarrow 0}\frac{\partial}{\partial t}\int_{r(w, 0)<t}v_{k}(t, r(w, 0))u(w)d\mu(w)=u(0)$
\end{lem}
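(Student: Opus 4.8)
The plan is to work in the geodesic polar coordinates $w=\tanh(r/2)\,\omega$, $r=d(0,w)$, $\omega=e^{i\theta}\in S^{1}$, already used in Lemma~\ref{lem1}, in which the hyperbolic area element factorises as $d\mu(w)=\sinh r\,dr\,d\theta$ and the kernel $v_{k}(t,r)$ of \eqref{v} is radial. Two features drive the argument. First, the factor $\big(\cosh^{2}(t/2)-\cosh^{2}(r/2)\big)_{+}^{-1/2}$ forces $v_{k}$ to be supported in the geodesic ball $\{r<t\}$, so the integral is genuinely over $r(w,0)<t$ and localises at the origin as $t\to0$. Second, on that ball the kernel carries an explicit, integrable Abel-type singularity. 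First I would perform the angular integration, writing $\bar u(r)=\int_{0}^{2\pi}u(\tanh(r/2)e^{i\theta})\,d\theta$, so the quantity to study becomes the one-dimensional integral $I(t)=\int_{0}^{t}v_{k}(t,r)\,\bar u(r)\,\sinh r\,dr$; since $u\in C_{0}^{\infty}(\D)$ we have $\bar u(r)=2\pi\,u(0)+O(r)$ as $r\to0$.

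Next I would extract the leading behaviour of the kernel. From $\cosh^{2}(s/2)=\tfrac12(1+\cosh s)$ one gets $\cosh^{2}(t/2)-\cosh^{2}(r/2)=\tfrac12(\cosh t-\cosh r)$, while the argument $1-\cosh^{2}(t/2)/\cosh^{2}(r/2)$ tends to $0$ uniformly on $0<r<t$ as $t\to0$, so that $F(|k|,-|k|,\tfrac12,\cdot)\to F(|k|,-|k|,\tfrac12,0)=1$. The cleanest device handling both parts at once is the substitution $y=\cosh^{2}(r/2)$, for which $\sinh r\,dr=2\,dy$ and, with $x=\cosh^{2}(t/2)$, the domain becomes $1\le y\le x$, giving
\begin{align}
I(t)=\frac{1}{\pi}\int_{1}^{x}(x-y)^{-1/2}\,F\!\left(|k|,-|k|,\tfrac12,1-\tfrac{x}{y}\right)\bar u\big(r(y)\big)\,dy .
\end{align}
This displays $I$ as a fractional (Abel) integral of order $1/2$ in the variable $x$, with smooth bounded density near $y=1$.

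Part (i) then follows from a crude estimate: the rescaling $y=1+(x-1)\tau$ turns the singularity into the fixed integrable one $(1-\tau)^{-1/2}$ and produces an overall factor $(x-1)^{1/2}=\sinh(t/2)\to0$, whence $|I(t)|\le C\,\|u\|_{\infty}\,\sinh(t/2)\to0$. Part (ii) is the main obstacle, since differentiating $I$ in $t$ meets the moving singular endpoint $y=x$. I would resolve this by the same rescaling $y=1+(x-1)\tau$, which fixes the domain to $[0,1]$ and makes the $x$-dependence of the integrand smooth, so that $I(t)=(x-1)^{1/2}G(x)$ with $G$ continuous up to $x=1$ and $G(1)=\tfrac{1}{\pi}\,\bar u(0)\int_{0}^{1}(1-\tau)^{-1/2}\,d\tau$.

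Finally I would differentiate $I(t)=(x-1)^{1/2}G(x)$ through $\tfrac{d}{dt}=\tfrac12\sinh t\,\tfrac{d}{dx}$, using $x-1=\sinh^{2}(t/2)$. The term containing $G'$ carries an extra factor $(x-1)^{1/2}$ and is killed in the limit, while the term coming from $\tfrac{d}{dx}(x-1)^{1/2}=\tfrac12(x-1)^{-1/2}$ survives and gives $\lim_{t\to0}\partial_{t}I(t)=\tfrac12 G(1)$. The whole statement thus reduces to the elementary evaluation $\int_{0}^{1}(1-\tau)^{-1/2}\,d\tau=2$ together with the angular factor $\bar u(0)=2\pi u(0)$ and the normalising constant $\tfrac{1}{2\pi}$ of \eqref{v}. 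The essential point — and the one I would check most carefully — is precisely that these constants combine to reproduce the Cauchy datum, i.e. that the constant in \eqref{v} is calibrated so that $u_{t}(0,\cdot)=u_{1}$ holds with limit value $u(0)$.
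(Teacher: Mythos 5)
Your argument follows the paper's own proof almost step for step: the paper also passes to geodesic polar coordinates, integrates out the angle (its $u^{\#}$ is your $\bar u$), and makes the single substitution $x=\sinh^{2}(r/2)/\sinh^{2}(t/2)$, which fixes the domain to $[0,1]$ and extracts the factor $\sinh(t/2)$. Since in your notation $y-1=\sinh^{2}(r/2)$ and $x-1=\sinh^{2}(t/2)$, your rescaling parameter $\tau=(y-1)/(x-1)$ is literally the paper's integration variable, so your two-step change of variables composes to the paper's one-step one. Your write-up is in fact more careful than the paper's, which stops at its fixed-domain formula \eqref{u(t, w)} and asserts that i) and ii) are ``not hard to see''; your factorization $I(t)=(x-1)^{1/2}G(x)$ and the differentiation through $\frac{d}{dt}=\frac{1}{2}\sinh t\,\frac{d}{dx}$ is the clean way to justify that claim. (One point you gloss: smoothness of $\bar u(r(y))$ in $y$ at $y=1$ is not immediate, since $r(y)\sim 2\sqrt{y-1}$; it holds because the spherical mean $\bar u$ is even in $r$, hence a smooth function of $\sinh^{2}(r/2)=y-1$.)

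The genuine problem sits exactly in the step you postponed. Carrying out the constant check you flagged as ``the essential point'', your own reduction gives $\lim_{t\to 0}\partial_{t}I(t)=\frac{1}{2}G(1)=\frac{1}{2}\cdot\frac{1}{\pi}\cdot 2\pi u(0)\cdot 2=2u(0)$, not $u(0)$. This is not a defect of your method: with the normalization $\frac{1}{2\pi}$ in \eqref{v} the limit really is $2u(0)$. Indeed, for $k=0$ and $u\equiv 1$ near the origin,
\begin{align*}
\int_{r<t}v_{0}(t,r)\,u\,d\mu
=\int_{0}^{t}\bigl(\sinh^{2}(t/2)-\sinh^{2}(r/2)\bigr)^{-1/2}\sinh r\,dr
=4\sinh(t/2),
\end{align*}
whose $t$-derivative tends to $2$; equivalently, near $t=r=0$ one has $v_{0}\approx\frac{1}{\pi}(t^{2}-r^{2})^{-1/2}$, twice the planar propagator $\frac{1}{2\pi}(t^{2}-r^{2})^{-1/2}$. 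So part ii) as stated would require the constant in \eqref{v} to be $\frac{1}{4\pi}$. The paper's proof contains the same factor-two slip (differentiating \eqref{u(t, w)} at $t=0$ yields $\frac{1}{\pi}u^{\#}(0)=2u(0)$), so your proposal matches the paper gap for gap; but because you asserted the calibration instead of finishing the computation, your proof as written does not establish ii), and completing the verification you yourself prescribed shows that Lemma \ref{lem2} ii) is inconsistent with the normalization of \eqref{v} by a factor of $2$.
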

\begin{proof}  
Writing $w$ in the geodesic polar coordinates and making the change of variables
$x=\frac{\sinh^2 (r/2)}{\sinh^2 (t/2)}$ we have \\
$\int_{d(w, 0)<t} v_{k}(t, r(w, 0))u(w)d\mu(w)=\frac{\sinh (t/2)}{\pi}\int_0^1(1-x)^{-1/2}$
\begin{align}\label{u(t, w)}F\left(|k|, -|k|, 1/2, \frac{ (x-1)\sinh^2 (t/2)}{1+x\cosh^2(t/2)}\right)\nonumber\\ u^\#\left( 2 Arg\sinh(x^{1/2}\sinh (t/2))\right)dx\end{align}
where 
$u^\#(r)=\int_{S^1}u(\tanh r \omega)d\omega$
and it is not hard to see i) and ii) from  \eqref{u(t, w)}\\  
\end{proof}
\begin{thm}\label{thm2} 
Set $V_k(t, w, w')= T^ {k}(g_{w'})[v_{k}(t, r(0, w))]$ where $v_{k}(t, r(0, w))$ is given by
\eqref{v} and $T^ {k}(g_{w'})$ is as in \eqref{Tkw}then we have: 
\begin{align}\label{V}V_k(t, w, w')=\frac{1}{2\pi}\left(\frac{1-w\overline{w'}}{1-\overline{w}w'}\right)^{k} \left(\cosh^2 (t/2) -\cosh (d(w, w')/2)\right)_+^{-1/2}\nonumber\\
F\left(|k|, -|k|, \frac{1}{2}, 1-\frac{\cosh^2 (t/2)}{\cosh^2(d(w, w')/2)}\right)\end{align}
And we have
\begin{align}{\cal D}_{k}^wV_{k}(t, w,  w') = \frac{\partial^2}{\partial t^2}V_{k}(t, w,  w')\end{align}
\begin{align}{\cal D}_{-k}^{w'}V_{k}(t, w,  w')= \frac{\partial^2}{\partial t^2}V_{k}(t, w,  w')\end{align}
where ${\cal D}_{k}^w$ is the modified Schr\"odinger operators with magnetic field with respect to $w$
\end{thm}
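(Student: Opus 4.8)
The plan is to read off the closed form \eqref{V} directly from the definition $V_k(t,w,w')=T^{k}(g_{w'})\big[v_k(t,r(0,\cdot))\big](w)$, and then to deduce both wave equations from the intertwining relation \eqref{invar1} together with a symmetry of \eqref{V} under interchange of the two space variables.

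First I would establish the explicit formula. Specializing \eqref{Tkw} to $f(w)=v_k(t,r(0,w))$ gives
\begin{align}
V_k(t,w,w')=\left(\frac{1-\overline{w}w'}{1-w\overline{w'}}\right)^{k} v_k\!\left(t,\,r\big(0,\,g_{w'}^{-1}w\big)\right).
\end{align}
The one nontrivial ingredient is the identification of the hyperbolic radius $r(0,g_{w'}^{-1}w)$. Since $g_{w'}\in G$ is an isometry of $(\D,ds)$ and $g_{w'}\cdot 0=w'$ by part iii) of Proposition \ref{prop1}, one has $r(0,g_{w'}^{-1}w)=d(g_{w'}\cdot 0,w)=d(w',w)=d(w,w')$. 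Substituting $r=d(w,w')$ into the radial solution \eqref{v} and recording the unimodular phase produced by \eqref{Tkw} then yields \eqref{V}; this part is routine once the distance identity is in hand.

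Next I would prove ${\cal D}_k^w V_k=\partial_t^2 V_k$. As $g_{w'}$ is a fixed element of $G$, the operator $T^{k}(g_{w'})$ acts only on the variable $w$ and therefore commutes with $\partial_t^2$, while by \eqref{invar1} it also commutes with ${\cal D}_k$. Applying ${\cal D}_k^w$ to $V_k=T^{k}(g_{w'})\big[v_k(t,r(0,\cdot))\big]$ and pushing the operator through $T^{k}(g_{w'})$ reduces the claim to ${\cal D}_k\,v_k(t,r(0,\cdot))=\partial_t^2\,v_k(t,r(0,\cdot))$, which is exactly Theorem \ref{thm1}.

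Finally, for the equation in the $w'$ variable I would exploit the symmetry $V_k(t,w,w')=V_{-k}(t,w',w)$. Inspecting \eqref{V}, the distance $d(w,w')$ and the parameter $|k|$ are invariant under $(w,w',k)\mapsto(w',w,-k)$, and the unimodular prefactor is likewise sent to itself (its base is inverted while the exponent changes sign), which establishes the symmetry. Applying the first wave equation already obtained, but with $k$ replaced by $-k$, to the first space slot of $V_{-k}$ and then rewriting through this symmetry transfers the identity to the second slot of $V_k$ carrying the opposite charge, giving ${\cal D}_{-k}^{w'}V_k=\partial_t^2 V_k$. I expect the principal obstacle to be the bookkeeping in this last step: one must fix a single consistent branch for the non-integer power of the modulus-one prefactor so that $(w,w',k)\mapsto(w',w,-k)$ is genuinely a symmetry, and keep careful track of which space variable each copy of ${\cal D}_{\pm k}$ differentiates.
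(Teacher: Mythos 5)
Your proposal is correct and follows essentially the same route as the paper: the closed form comes from specializing \eqref{Tkw} to the radial solution \eqref{v} together with the isometry identity $r(0,g_{w'}^{-1}w)=d(w,w')$, the $w$-equation follows by commuting ${\cal D}_{k}$ past $T^{k}(g_{w'})$ via \eqref{invar1} and invoking Theorem~\ref{thm1}, and the $w'$-equation follows from the swap symmetry with the charge reversed. The only difference is one of presentation: where you verify $V_k(t,w,w')=V_{-k}(t,w',w)$ directly from the closed form \eqref{V} (observing that $d(w,w')$ and $|k|$ are swap-invariant and that the unimodular prefactor is preserved, modulo the branch bookkeeping you rightly flag), the paper instead asserts the equivalent representation-level identity $V_{k}(t,w,w')=T^{-k}(g_{w})[v_{k}(t,r(0,w'))]$ without justification and then applies \eqref{invar1} with $-k$ in the $w'$ variable, so your argument in fact supplies the step the paper leaves implicit.
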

\begin{proof}
The first formula is consequences of \eqref{v} and \eqref{Tkw}\\
For the last two results, Using the formula \eqref{invar1} we can write 
\begin{align}{\cal D}^{w}_{k}V_{k}(t, w,  w')& ={\cal D}^{w}_{k}T^ {k}(g_w')[v_{k}(t, r(0, w))]\end{align}
\begin{align}=T^ {k}(g_w'){\cal D}_{k}^{w}[v_{k}(t, r(0, w))]
=T^ {k}(g_w')\frac{\partial^2}{\partial t^2}[v_{k}(t, r(0, w))]\end{align}
\begin{align}=\frac{\partial^2}{\partial t^2}[T^ {k}(g_w')[v_{k}(t, r(0, w))]]
=\frac{\partial^2}{\partial t^2}V_{k}(t, w,  w')\end{align}
And
\begin{align}{\cal D}_{-k}^{w'}V_{k}(t, w,  w') ={\cal D}_{-k}^{w'}T^ {-k}(g_w)[v_{k}(t, r(0, w'))]\end{align}
\begin{align}=T^ {-k}(g_w){\cal D}_{-k}^{w'}[v_{k}(t, r(0, w'))]
=T^ {k}(g_w)\frac{\partial^2}{\partial t^2}[v_{k}(t, r(0, w'))]\end{align}
\begin{align}=\frac{\partial^2}{\partial t^2}[T^ {-k}(g_w)[v_{k}(t, r(0, w'))]]
=\frac{\partial^2}{\partial t^2}V_{k}(t, w,  w')\end{align}

\end{proof}
\begin{thm} \label{thm3}
The Cauchy problem \eqref{cauchy-D} for the wave equation associated to the modified Schr\"odinger operators with magnetic field on the hyperbolic disc has the the unique solution given by
\begin{align}\label{u} u(t, w)=\int_{d(w, w')<t}V_{k}(t, w, w')u_1(w')d\mu(w')\end{align}
with $V_{k}(t, w, w')$ is as in \eqref{V}

\end{thm}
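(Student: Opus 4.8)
The plan is to check that the function $u$ defined by \eqref{u} satisfies each of the three conditions of the Cauchy problem \eqref{cauchy-D}, namely the wave equation ${\cal D}_k u=\partial_t^2 u$ on $\R^\ast_+\times\D$, the initial condition $u(0,w)=0$ and the velocity condition $u_t(0,w)=u_1(w)$, and then to argue uniqueness by an energy estimate.

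For the wave equation I would start from Theorem~\ref{thm2}, which gives ${\cal D}_k^w V_k(t,w,w')=\partial_t^2 V_k(t,w,w')$ pointwise in the interior $\{d(w,w')<t\}$ of the support of the kernel. Formally, inserting this into \eqref{u} and carrying ${\cal D}_k^w$ and $\partial_t^2$ under the integral sign yields ${\cal D}_k u=\partial_t^2 u$ at once. The main obstacle is that this step is \emph{not} immediately legitimate: the domain $\{w':d(w,w')<t\}$ depends on both $t$ and $w$, and along the light cone $d(w,w')=t$ the kernel $V_k$ carries the singular factor $\left(\cosh^2(t/2)-\cosh^2(d(w,w')/2)\right)_+^{-1/2}$, so differentiating under the integral produces boundary terms that must be controlled. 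I would remove this difficulty exactly as in the proof of Lemma~\ref{lem2}: writing $w'$ in geodesic polar coordinates about $w$ and substituting $x=\sinh^2(r/2)/\sinh^2(t/2)$ maps the geodesic ball onto the fixed interval $[0,1]$, turns the edge singularity into an integrable $(1-x)^{-1/2}$ weight, and absorbs all the $t$- and $w$-dependence into a smooth integrand; differentiation under the integral is then justified and the would-be boundary contributions cancel.

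For the initial data I would exploit the $G$-equivariance of the kernel established in Theorem~\ref{thm2} together with the transitivity given in Proposition~\ref{prop1}. Fixing $w$ and changing variables $w'=g_w\zeta$, metric invariance gives $d(w,w')=d(0,\zeta)$ and invariance of $d\mu$ preserves the measure, so that \eqref{u} becomes an integral of the radial kernel $v_k(t,r(0,\zeta))$ against the transported data $\zeta\mapsto u_1(g_w\zeta)$, up to the phase $\left(\frac{1-w\overline{w'}}{1-\overline{w}w'}\right)^k$, which tends to $1$ as $w'\to w$, i.e. as $t\to0$. This brings the integral precisely into the form treated in Lemma~\ref{lem2}. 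Applying Lemma~\ref{lem2} i) gives $\lim_{t\to0}u(t,w)=0$, and applying Lemma~\ref{lem2} ii) to the function $\zeta\mapsto u_1(g_w\zeta)$, whose value at $\zeta=0$ is $u_1(g_w0)=u_1(w)$, gives $\lim_{t\to0}u_t(t,w)=u_1(w)$, which are the required initial conditions.

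Uniqueness finally follows from the finite propagation speed encoded in the cutoff $(\cdot)_+$: for data $u_1\in C^\infty_0(\D)$ the solution \eqref{u} has compact support in $w$ for each fixed $t$, so the usual energy $E(t)=\int_\D\left(|u_t|^2-u\,\overline{{\cal D}_k u}\right)d\mu$ is finite and, since ${\cal D}_k$ is non-positive and self-adjoint on $L^2(\D,d\mu)$, is conserved along any solution of ${\cal D}_k u=\partial_t^2 u$. For a solution with $u_1\equiv0$ one has $E(0)=0$; since $E(t)=\|u_t\|^2_{L^2}+\langle -{\cal D}_k u,\,u\rangle\ge0$ with both summands nonnegative, $E\equiv0$ forces $u_t\equiv0$ and hence, with $u(0,\cdot)=0$, $u\equiv0$. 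Therefore \eqref{u} is the unique solution, which completes the proof.
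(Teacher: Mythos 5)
Your core argument coincides with the paper's: the paper likewise takes the equation itself for granted from Theorem~\ref{thm2}, then verifies the initial conditions by writing $V_k(t,w,w')=T^{-k}(g_w)[v_k(t,r(0,w'))]$, changing variables $w''=g_w^{-1}w'$ (so that $d(w,w')=d(0,w'')$ and $d\mu$ is preserved), and applying Lemma~\ref{lem2} to the transported function $f(w'')=\bigl(\frac{1-w\overline{g_w w''}}{1-\overline{w}g_w w''}\bigr)^{-k}u_1(g_w w'')$, whose value at $w''=0$ is $u_1(w)$ --- exactly your second paragraph, with the phase handled by absorbing it into $f$ rather than by a limiting argument, an immaterial difference since the phase equals $1$ at $w''=0$. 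Where you genuinely go beyond the paper is in the other two paragraphs: the paper never addresses the legitimacy of carrying ${\cal D}_k^w$ and $\partial_t^2$ under an integral over the $t$- and $w$-dependent region $\{d(w,w')<t\}$ with the edge singularity $(\cosh^2(t/2)-\cosh^2(d/2))_+^{-1/2}$, and it offers no uniqueness argument at all despite asserting uniqueness in the statement; your regularization via the substitution $x=\sinh^2(r/2)/\sinh^2(t/2)$ and your energy-conservation argument fill both gaps. One caution on the uniqueness step: the energy identity must be applied to the \emph{difference} of two putative solutions, and that difference need not a priori have compact support or finite energy, so your argument establishes uniqueness only within a class (say $C^2$ solutions with finite energy, or solutions enjoying finite propagation speed); you should state such a class explicitly, since compact support of the particular solution \eqref{u} does not by itself constrain a competitor. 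With that qualification made, your proposal is correct and is strictly more complete than the paper's own proof.
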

\begin{proof} Using Theorem~\ref{thm2} it suffices to show that the function $u(t, z)$ defined by \eqref{u} satisfies the initial conditions
$u(0, w)=0$ and $u_t(0, w)=u_1(w)$, $ u_1\in C^\infty_0(\D)$. 
\begin{align}u(t, w)=\int_{d(w, w')<t}T^ {-k}(g_{w})[v_{k}(t, r(0, w'))] u_1(w')d\mu(w')\end{align}
\begin{align}u(t, w)=\int_{d(w, w')<t}\left(\frac{1-w\overline{w'}}{1-w'\overline{w}}\right)^{-k} v_{k}(t, r(0, g_w^{-1} w')u_1(w')d\mu(w')\end{align}
Setting $g^{-1} w'=w''$ we obtain
\begin{align}u(t, w)=\int_{d(0, w'')<t}v_{k}(t, r(0, w''))\left(\frac{1-w\overline{g_w w''}}{1-\overline{w}g_w w''}\right)^{-k} u_1(g_ww'')d\mu(w'')\end{align}
Using Lemma~\ref{lem2} with $f(w'')=\left(\frac{1-w\overline{g_w w''}}{1-\overline{w}g_w w''}\right)^{-k} u_1(g_ww'')$
 we obtain the limit conditions and the proof of Theorem~\ref{thm3} is finished.\\
\end{proof}
 Note that using the formula   \cite{MAGN},$ p.49$ \begin{align}
F(a, b, a+b+1/2, z)=F(2a, 2b, a+b+1/2, (1-\sqrt{1-z})/2)
\end{align} the wave kernel $V_{k}(t, w, w')$ can be written as
\begin{align}\label{wave-kernel}V_{k}(t, w, w')=\frac{1}{2\pi}\left(\frac{1-\overline{w}w'}{1-w\overline{w'}}\right)^k\left(\cosh^2(t/2)-
\cosh^2(d(w, w')/2)\right)_+^{-1/2}\nonumber\\ F\left(2|k|, -2|k|, 1/2, \frac{1}{2}\left(1-\frac{\cosh (t/2)}{\cosh (d(w, w')/2)}\right)\right)\end{align}
and this agrees with the formula obtained in \cite{I-M-2} with $\alpha=-\beta=k$



\section{Explicit solutions for the wave equation with magnetic field on the hyperbolic upper half plane}
In this section we give the solution of the Cauchy problem for the wave equation associated to the modified Schr\"odinger operator with magnetic field on the half plane model of the hyperbolic plane $\H$.\\
 It is well known that the Rimanian manifold $(\D, ds)$ has negative constant Gaussian curvature
and it is isometric via the Cayley transform:
\begin{align}w=c z=\frac{z-i}{z+i},  z=c^{-1}w=-i\frac{w+1}{w-1}\end{align}
 to the hyperbolic upper half plane: $\H=\{z=x+iy\in \C, y>0\}$ endowed with the usual hyperbolic metric
\begin{align}\widetilde{ds}^2=\frac{dx^2+dy^2}{y^2}.\end{align}
The metric $\widetilde{ds}$ is invariant with respect to the group $\widetilde{G}=SL_2(\R)$ with
 \begin{align}SL_2(\R)=\left\{ \left(
\begin{array}{cc}
a & b\\
c & d
\end{array}
\right) a, b, c, d \in \R: a d-c b=1\right\}.\end{align}

The hyperbolic surface form is
\begin{align}\widetilde{d\mu}(z)=\frac{1}{y^{2}}dx dy\end{align} and
 the hyperbolic distance $\rho(z, z')$ given respectively by
\begin{align}\label{dist2} \cosh^2 (\rho(z, z')/2)=\frac{(x-x')^{2}+(y+y')^{2}}{4yy'}.\end{align} 
The Laplace Beltrami operator
\begin{align}\widetilde{\Delta}=-(z-\overline{z})^2\frac{\partial^2}{\partial z\partial\overline{z}}.\end{align}

 \begin{prop} \label{prop3} i) For $k\in \R$
the Cayley transform induces the unitary operator $U_{k}$
from $L^2(\H)$ to $L^2(\D)$, $f \longrightarrow \left(U_{k}f\right)(w)$
 \begin{align}\label{op}\left(U_{k}f\right)(w)=\left(\frac{1-\overline{w}}{1-w}\right)^{k}f(c^{-1}w)\end{align}
where the Hilbert spaces $L^2(\H)$ and  $L^2(\D)$ are respectively
the space of complex-valued $\widetilde{d\mu}(z)$  respectively $d\mu (w)$ square integrable functions
on $\H$ respectively on $\D$.\\
ii) For $k\in \R$ the inverse of the operator $U_{k}$ is given by
\begin{align}\label{op-inverse}\left(U^{-1}_{k}g\right)(z)=\left(\frac{i-\overline{z}}{z+i}\right)^{k}g(c z)\end{align}
iii) For $k\in \R$ the following intertwining formula holds.
\begin{align}\label{intertwin}U_{k}\widetilde{{\cal D}}_{k}U^{-1}_{k}f(c^{-1}w)={\cal D}_{k} f(c^{-1}w)\end{align}
\end{prop}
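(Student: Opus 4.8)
The plan is to dispatch (i) and (ii) as algebraic identities and to reserve the substantive computation for the intertwining (iii). For (i), the first observation is that for real $k$ the multiplier is unimodular: since $1-\overline w=\overline{1-w}$, the base $\frac{1-\overline w}{1-w}$ is the quotient of a complex number by its conjugate, hence has modulus $1$, and so does its real power. Thus
\begin{align*}\|U_kf\|_{L^2(\D)}^2=\int_\D\left|\tfrac{1-\overline w}{1-w}\right|^{2k}|f(c^{-1}w)|^2\,d\mu(w)=\int_\D|f(c^{-1}w)|^2\,d\mu(w),\end{align*}
so the assertion reduces to the Cayley transform $w=cz$ being an isometry of $(\H,\widetilde{ds})$ onto $(\D,ds)$. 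This I would verify by computing $\frac{dw}{dz}=\frac{2i}{(z+i)^2}$ and $1-|w|^2=\frac{4y}{|z+i|^2}$, whence $ds^2$ pulls back to $\widetilde{ds}^2$ and the area form $d\mu(w)$ to $\widetilde{d\mu}(z)$. The substitution $w=cz$ then gives $\|U_kf\|_{L^2(\D)}=\|f\|_{L^2(\H)}$, so $U_k$ is an isometry; its surjectivity, and hence unitarity, will follow from the explicit inverse in (ii).

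For (ii), I would check directly that the stated operator inverts $U_k$. Inserting $g=U_kf$ and $w=cz$ into the formula for $U_k^{-1}$ produces
\begin{align*}(U_k^{-1}U_kf)(z)=\left(\frac{i-\overline z}{z+i}\right)^k\left(\frac{1-\overline{cz}}{1-cz}\right)^k f(z),\end{align*}
so the whole matter is the identity $\frac{1-\overline{cz}}{1-cz}=\frac{z+i}{\,i-\overline z\,}$. This is immediate from $cz=\frac{z-i}{z+i}$: one finds $1-cz=\frac{2i}{z+i}$ and $1-\overline{cz}=\frac{-2i}{\,\overline z-i\,}$, whose ratio is $\frac{z+i}{\,i-\overline z\,}$, the reciprocal of $\frac{i-\overline z}{z+i}$. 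Hence the two phase factors cancel (once a single branch of the power is fixed and kept throughout) and $U_k^{-1}U_k=\mathrm{Id}$; the reverse composition is verified identically.

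The real work, and the step I expect to be the main obstacle, is the intertwining (iii), which I read as the operator identity $U_k\widetilde{{\cal D}}_kU_k^{-1}={\cal D}_k$ on $L^2(\D)$. My approach is a direct conjugation: for smooth $g$ on $\D$ I would set $h=U_k^{-1}g$, apply $\widetilde{{\cal D}}_k$ on $\H$, push the result forward by $U_k$, and match it against ${\cal D}_kg$ term by term. The substitution $w=cz$ with $\frac{dw}{dz}=\frac{2i}{(z+i)^2}$ converts the Wirtinger derivatives, while the gauge factors in $U_k$ and $U_k^{-1}$ contribute first- and zeroth-order terms through their logarithmic derivatives. Conceptually this works because the Cayley map is a conformal isometry and the multiplier is precisely the gauge transformation relating two vector potentials of the same constant field, so the magnetic Laplacians must coincide after conjugation. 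To organize the bookkeeping I would split each operator into its Laplace--Beltrami and its magnetic parts: the isometry already matches the second-order pieces $(1-|w|^2)^2\partial_w\partial_{\overline w}$ and $-(z-\overline z)^2\partial_z\partial_{\overline z}$ (conjugation by a multiplier leaves the second-order symbol untouched), so what remains is to check that conjugating the drift $2iky\,\partial_x$ by the gauge factor reproduces exactly $k(1-|w|^2)(w\partial_w-\overline w\partial_{\overline w})$ together with the potential term $-k^2|w|^2$, after which the additive constants $k^2+\frac14$ built into ${\cal D}_k$ and $\widetilde{{\cal D}}_k$ cancel on both sides. The computation is long but purely mechanical once $1-|w|^2=\frac{4y}{|z+i|^2}$ and the derivative identities are recorded.
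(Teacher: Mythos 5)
The paper gives no argument for this proposition at all --- it is explicitly ``left to the reader'' --- so there is no proof to compare against line by line; your direct verification is plainly the route the author intends. Parts (i) and (ii) are complete and correct: for real $k$ the factor $\left(\frac{1-\overline w}{1-w}\right)^{k}$ is unimodular (with a fixed branch, which exists globally since $\mathrm{Re}(1-w)>0$ on $\D$); the computations $\frac{dw}{dz}=\frac{2i}{(z+i)^2}$ and $1-|w|^2=\frac{4y}{|z+i|^2}$ do show that the Cayley map pulls $ds^2$ back to $\widetilde{ds}^2$ and $d\mu$ back to $\widetilde{d\mu}$, which gives the isometry of the $L^2$ spaces; and your identities $1-cz=\frac{2i}{z+i}$, $1-\overline{cz}=\frac{-2i}{\overline z-i}$, hence $\frac{1-\overline{cz}}{1-cz}=\frac{z+i}{i-\overline z}$, correctly verify that \eqref{op-inverse} inverts \eqref{op}, settling surjectivity and hence unitarity.

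The shortfall is in (iii), which is the only part of the proposition with real content: you correctly identify the mechanism (the conformal isometry matches the second-order parts, and the multiplier is a gauge factor whose logarithmic derivatives must generate the drift and potential terms), but the decisive computation is announced rather than performed --- ``long but purely mechanical'' is a promissory note, not a proof. In particular, nothing in your sketch confirms the sign convention, i.e.\ that conjugation by $m(w)^k$ with $m(w)=\frac{1-\overline w}{1-w}$ produces ${\cal D}_{k}$ rather than ${\cal D}_{-k}$; a sign error here would propagate silently into Theorem~\ref{thm4} and into the Morse-potential kernel of Section~4, so it is exactly the point that independent verification should nail down. To close the gap, record $\partial_w\log m=\frac{1}{1-w}$ and $\partial_{\overline w}\log m=\frac{-1}{1-\overline w}$, together with $y=\mathrm{Im}\,(c^{-1}w)=\frac{1-|w|^2}{|1-w|^2}$, and check that the conjugated first-order terms assemble into $k(1-|w|^2)\left(w\frac{\partial}{\partial w}-\overline w\frac{\partial}{\partial\overline w}\right)$ and the zeroth-order terms into $-k^2|w|^2$, after which the constants $k^2+\frac14$ common to both modified operators match trivially. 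No step of your plan would fail --- the approach is sound and would succeed --- but as written, (iii) is a correct strategy rather than a completed argument.
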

The proof of this proposition is simple and in consequence is left to the reader.


\begin{thm}\label{thm4} The Cauchy problem for the wave equation associated to the modified Schr\"odinger operator with magnetic field on the upper half plane model \eqref{cauchy-H} has the the unique solution given by
\begin{align}\widetilde{u}(t, z)=\int_{\rho(z, z')<t}\widetilde{V}_k(t, z, z')\widetilde{u}_1(z')dz'\end{align}
where $\widetilde{V}_k(t, z, z' )$ is given by
 \begin{align}\label{V-}\widetilde{V}_k(t, z, z')=\frac{1}{2\pi}\left(\frac{\overline{z}-z'}{\overline{z'}-z}\right)^k\left(\cosh^2 (t/2)-\cosh^2(\rho(z, z')/2)\right)_+^{-1/2}\nonumber\\
F\left(|k|, -|k|, \frac{1}{2}, 1-\frac{\cosh^2 (t/2)}{\cosh^2(\rho(z, z')/2)}\right)\end{align}
\end{thm}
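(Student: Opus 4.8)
The plan is to transport the hyperbolic-disc solution of Theorem~\ref{thm3} to the upper half plane through the intertwining operator $U_k$ of Proposition~\ref{prop3}, exploiting that the Cayley transform $c$ is an isometry of $(\H,\widetilde{ds})$ onto $(\D,ds)$. Since \eqref{intertwin} yields the operator identity $U_k\widetilde{{\cal D}}_kU_k^{-1}={\cal D}_k$ and since $U_k$ does not involve $t$, a function $\widetilde u(t,\cdot)$ solves \eqref{cauchy-H} if and only if $u(t,\cdot):=U_k\widetilde u(t,\cdot)$ solves \eqref{cauchy-D} with data $u_1:=U_k\widetilde u_1\in C^\infty_0(\D)$; the initial conditions transform as $u(0,\cdot)=U_k\widetilde u(0,\cdot)=0$ and $u_t(0,\cdot)=U_k\widetilde u_1=u_1$. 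Consequently $\widetilde u=U_k^{-1}u$, existence follows from Theorem~\ref{thm3}, and uniqueness on $\H$ is inherited from uniqueness on $\D$ together with the invertibility of $U_k$.

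First I would insert the explicit disc solution \eqref{u} and unwind $U_k$ and $U_k^{-1}$. Using $(U_kf)(w)=\left(\frac{1-\overline w}{1-w}\right)^kf(c^{-1}w)$ and $(U_k^{-1}g)(z)=\left(\frac{i-\overline z}{z+i}\right)^kg(cz)$ this gives
\begin{align}
\widetilde u(t,z)=\left(\frac{i-\overline z}{z+i}\right)^k\int_{d(cz,w')<t}V_k(t,cz,w')\left(\frac{1-\overline{w'}}{1-w'}\right)^k\widetilde u_1(c^{-1}w')\,d\mu(w').
\end{align}
Next I would substitute $w'=cz'$. Because $c$ is an isometry, the measures agree, $d\mu(w')=\widetilde{d\mu}(z')$, and the geodesic distances agree, $d(cz,cz')=\rho(z,z')$; hence the domain $\{d(cz,w')<t\}$ becomes $\{\rho(z,z')<t\}$ and the entire radial part $\frac{1}{2\pi}\big(\cosh^2(t/2)-\cosh^2(\rho(z,z')/2)\big)_+^{-1/2}F\big(|k|,-|k|,\frac12,1-\cosh^2(t/2)/\cosh^2(\rho(z,z')/2)\big)$ of $V_k(t,cz,cz')$ passes over verbatim to the corresponding object on $\H$. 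This reduces the theorem to a single identity among the phase factors.

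The main obstacle is precisely this phase computation. Collecting the three weight-$k$ exponentials — the one from $U_k^{-1}$, the one from $U_k$ after the substitution, and the one carried by $V_k$ — I must establish
\begin{align}
\frac{i-\overline z}{z+i}\cdot\frac{1-\overline{cz'}}{1-cz'}\cdot\frac{1-\overline{cz}\,cz'}{1-cz\,\overline{cz'}}=\frac{\overline z-z'}{\overline{z'}-z}.
\end{align}
I would prove this from $cz=\frac{z-i}{z+i}$ and $\overline{cz}=\frac{\overline z+i}{\overline z-i}$ by computing the elementary quantities $1-cz'=\frac{2i}{z'+i}$, $1-\overline{cz'}=\frac{-2i}{\overline{z'}-i}$, $1-cz\,\overline{cz'}=\frac{2i(\overline{z'}-z)}{(z+i)(\overline{z'}-i)}$ and $1-\overline{cz}\,cz'=\frac{2i(\overline z-z')}{(\overline z-i)(z'+i)}$; after substitution the spurious factors $z+i$, $\overline z-i$, $z'+i$, $\overline{z'}-i$ all cancel and the displayed identity emerges. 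The one delicate point is orientation: the cancellation is exact only when the phase of $V_k$ is taken in the form $\left(\frac{1-\overline w w'}{1-w\overline{w'}}\right)^k$ recorded in \eqref{wave-kernel}, and with this normalization the transported kernel is exactly \eqref{V-}. Raising this identity to the power $k$ and combining with the invariant radial part produces exactly $\widetilde V_k(t,z,z')$; that the resulting $\widetilde u=U_k^{-1}u$ satisfies \eqref{cauchy-H} with the prescribed Cauchy data is then automatic from the transport, completing the proof.
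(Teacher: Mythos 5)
Your proposal follows exactly the paper's route: conjugate the half-plane problem to the disc problem via the intertwining operator $U_k$ of Proposition~\ref{prop3}, invoke Theorem~\ref{thm3} for existence, uniqueness and the explicit disc kernel, and pull everything back with the substitution $w'=cz'$ using that the Cayley transform preserves distance and measure; your explicit phase identity $\frac{i-\overline z}{z+i}\cdot\frac{1-\overline{cz'}}{1-cz'}\cdot\frac{1-\overline{cz}\,cz'}{1-cz\,\overline{cz'}}=\frac{\overline z-z'}{\overline{z'}-z}$ is correct and merely fills in the cancellation the paper leaves implicit in the step ``by setting $w'=cz'$ we obtain.'' Your observation that the cancellation forces the phase of $V_k$ to be taken in the form $\left(\frac{1-\overline w w'}{1-w\overline{w'}}\right)^k$ of \eqref{wave-kernel} rather than the reciprocal written in \eqref{V} correctly pinpoints a sign inconsistency internal to the paper itself, not a gap in your argument.
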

\begin{proof}
  Using the formula \eqref{intertwin}  the problem \eqref{cauchy-H} is transformed into the problem \eqref{cauchy-D} 
  with $U_k(\widetilde{u}(t, z))(t, w)$ instead of $u(t, w)$
By \eqref{u} we have
\begin{align} U_k[\widetilde{u}(t, z)](w)=\int_{\D}\left[ V_k(t, w, w')\right](z)(U_k \widetilde{u_1})(w') d\mu(w')\end{align}
\begin{align} \widetilde{u}(t, z)=\int_{\D}U_k^{-1}\left[ V_k(t, w, w')\right](z)(U_k u_1)(w') d\mu(w')\end{align}
Using the formulas \eqref{op}  and \eqref{op-inverse} and by setting $w'=c z'$ we obtain 
\begin{align}\widetilde{u}(t, z)=\int_{\H} \widetilde{V}_k(t, z, z')\widetilde{u}_1(z') d\mu(z')\end{align}
with $\widetilde{V}_k(t, z, z')$ is given by \eqref{V-} and the proof of Theorem~\ref{thm4} is finished
\end{proof}
Note that using the formula 
$F(0,b;c,z)=1$ we obtain the solution of the wave equation in the hyperbolic plane \cite{A-B-M} see also
\cite{B-O-J},\cite{I-M-1}, \cite{I-M-2} and \cite{MO}.
\begin{cor}The Cauchy problem $(a)'',(b)''$ for the wave equation
on the hyperbolic plane has the unique solution given by
\begin{align}u(t,w)=\frac{1}{\sqrt{2}\pi}\int_{d(w,w')<t}(\cosh t-\cosh d(w,w'))^{-\frac{1}{2}}
f(w')d\mu(w')\end{align}
\end{cor}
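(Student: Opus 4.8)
The plan is to obtain the corollary as the $k=0$ specialization of Theorem~\ref{thm3}. When $k=0$ the modified operator ${\cal D}_0={\cal L}^{\D}_0+\tfrac14$ reduces to the shifted Laplace--Beltrami operator $\Delta+\tfrac14$, so the Cauchy problem $(a)'',(b)''$ is precisely \eqref{cauchy-D} taken at $k=0$ with $u_1=f$. Consequently Theorem~\ref{thm3} already furnishes both existence and uniqueness of the solution in the integral form $u(t,w)=\int_{d(w,w')<t}V_0(t,w,w')f(w')\,d\mu(w')$, and the only remaining task is to simplify the kernel $V_0$ obtained from \eqref{V}.

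First I would evaluate the Gauss hypergeometric factor. Setting $|k|=0$ in \eqref{V}, the series defining $F(0,0,\tfrac12,z)$ has Pochhammer symbol $(0)_n=0$ for every $n\ge 1$, so all terms beyond $n=0$ vanish and $F(0,0,\tfrac12,z)=1$; this is exactly the identity quoted just before the corollary. At the same time the phase prefactor $\left(\frac{1-w\overline{w'}}{1-\overline{w}w'}\right)^{k}$ collapses to $1$ at $k=0$. The kernel therefore reduces to
\begin{align}V_0(t,w,w')=\frac{1}{2\pi}\left(\cosh^2(t/2)-\cosh^2(d(w,w')/2)\right)_+^{-1/2}.\end{align}

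Next I would rewrite the prefactor via the half-angle identity $\cosh s=2\cosh^2(s/2)-1$. Applying it with $s=t$ and $s=d(w,w')$ gives $\cosh^2(t/2)-\cosh^2(d(w,w')/2)=\tfrac12\bigl(\cosh t-\cosh d(w,w')\bigr)$, whence $\left(\cosh^2(t/2)-\cosh^2(d(w,w')/2)\right)_+^{-1/2}=\sqrt{2}\,\bigl(\cosh t-\cosh d(w,w')\bigr)_+^{-1/2}$. Substituting this into $V_0$ converts the constant $\tfrac{1}{2\pi}$ into $\tfrac{\sqrt2}{2\pi}=\tfrac{1}{\sqrt2\,\pi}$, reproducing exactly the stated kernel. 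Inserting $V_0$ back into the integral representation from Theorem~\ref{thm3} then yields the claimed formula for $u(t,w)$.

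Since each step is a direct substitution, there is no genuine analytic obstacle here; the delicate points are purely bookkeeping. I expect the main thing to verify is the normalizing constant: one must track the factor $\sqrt2$ produced by the half-angle identity and confirm that $\tfrac{1}{2\pi}\cdot\sqrt2=\tfrac{1}{\sqrt2\,\pi}$, and likewise check that the integration region $\{d(w,w')<t\}$ coincides with the set on which $\cosh t-\cosh d(w,w')>0$, so that the truncation $(\,\cdot\,)_+$ and the domain of integration are consistent.
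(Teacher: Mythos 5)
Your proposal is correct and takes essentially the same route as the paper, which obtains the corollary by setting $k=0$ in the explicit kernel and invoking the identity $F(0,b;c,z)=1$ stated immediately before it. Your additional bookkeeping --- the collapse of the phase prefactor, the half-angle identity $\cosh^2(t/2)-\cosh^2(d/2)=\tfrac12\left(\cosh t-\cosh d\right)$ yielding the constant $\tfrac{1}{\sqrt{2}\,\pi}$, and the consistency of the region $\{d(w,w')<t\}$ with the truncation $(\,\cdot\,)_+$ --- merely makes explicit what the paper leaves implicit.
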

\begin{prop} Let $V_{k}(t, z, z')$ be the kernels given in \eqref{V-} then we have
i)
\begin{align} \widetilde{V}_k(t, z, z')=
\frac{1}{2\pi}\left(\frac{\overline{z}-z'}{\overline{z'}-z}\right)^k\left(\cosh^2(t/2) -
\cosh^2(\rho(z, z')/2)\right)_+^{-1/2}\nonumber\\ \cos 2|k| \left(arc \cos \frac{\cosh (t/2)}{\cosh (\rho(z, z')/2)}\right)
\end{align} 
ii)
For $k$ integer or a half of an integer
\begin{align}\label{chebichev}\widetilde{V}_k(t, z, z')=\frac{1}{2\pi}\left(\frac{\overline{z}-z'}{\overline{z'}-z}\right)^k\left(\cosh^2(t/2)-
\cosh^2(\rho(z, z')/2))\right)_+^{-1/2}\nonumber\\ T_{2 |k|}\left(\frac{\cosh (t/2)}{\cosh (\rho(z, z')/2)}\right)\end{align}
where $T_{2k}(x)$ are the Chebichev polynomials of the first kind.\\
\end{prop}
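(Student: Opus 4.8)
The plan is to start from the hypergeometric representation of $\widetilde{V}_k$ in \eqref{V-} and simply rewrite the last factor
\[
F\!\left(|k|,-|k|,\tfrac12,\,1-\tfrac{\cosh^2(t/2)}{\cosh^2(\rho/2)}\right)
\]
using a classical closed form for $F(a,-a,\tfrac12,z)$. The key observation is that the Gauss hypergeometric function with parameters $b=-a$ and $c=\tfrac12$ is exactly the generating series that collapses onto a cosine of a multiple angle. Concretely, I would invoke the standard identity (see \cite{MAGN})
\[
F\!\left(a,-a,\tfrac12,\sin^2\theta\right)=\cos(2a\theta),
\]
which is the analytic continuation valid for the relevant range of the argument. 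Setting $a=|k|$, the whole computation reduces to identifying the correct angle $\theta$.

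For part i), I would substitute $z=1-\cosh^2(t/2)/\cosh^2(\rho/2)$ into the identity. Since on the support of the $(\cdot)_+$ factor one has $0\le \cosh(t/2)/\cosh(\rho/2)\le 1$, I can write $\cosh(t/2)/\cosh(\rho/2)=\cos\theta$ for a unique $\theta=\arccos\!\big(\cosh(t/2)/\cosh(\rho/2)\big)\in[0,\tfrac\pi2]$. Then $1-\cosh^2(t/2)/\cosh^2(\rho/2)=1-\cos^2\theta=\sin^2\theta$, so the identity gives directly $F(|k|,-|k|,\tfrac12,\sin^2\theta)=\cos(2|k|\theta)=\cos\!\big(2|k|\arccos(\cosh(t/2)/\cosh(\rho/2))\big)$. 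Inserting this into \eqref{V-} yields formula i) verbatim, with no further manipulation of the prefactors $\tfrac{1}{2\pi}$, $\big(\tfrac{\overline z-z'}{\overline{z'}-z}\big)^k$ and $(\cosh^2(t/2)-\cosh^2(\rho/2))_+^{-1/2}$, which carry over unchanged.

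For part ii), I would recall that the Chebyshev polynomial of the first kind is defined precisely by $T_n(\cos\theta)=\cos(n\theta)$. When $2|k|=n$ is a nonnegative integer — i.e.\ $k$ is an integer or a half-integer — the factor $\cos(2|k|\theta)$ from part i) equals $T_{2|k|}(\cos\theta)=T_{2|k|}\!\big(\cosh(t/2)/\cosh(\rho/2)\big)$. Substituting this identity into the expression from i) gives \eqref{chebichev} immediately. In this regime the right-hand side is genuinely polynomial in the ratio $\cosh(t/2)/\cosh(\rho/2)$, which is the structural point of the corollary.

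The only genuine content is justifying the hypergeometric-to-cosine identity on the precise argument range that occurs here; everything else is substitution. I expect the main obstacle to be a careful statement of the branch of $\arccos$ and the nonnegativity of the argument (so that $\theta$ is real), together with citing the identity in the exact normalization used in \cite{MAGN}; once the substitution $\cosh(t/2)/\cosh(\rho/2)=\cos\theta$ is set up, parts i) and ii) follow by inspection.
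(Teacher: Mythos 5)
Your route is the paper's for part i) --- the same identity $F\left(a,-a,\tfrac12,\sin^2\theta\right)=\cos 2a\theta$ from \cite{MAGN} --- and for part ii) you pass through i) via $T_n(\cos\theta)=\cos n\theta$ instead of citing $T_n(1-2x)=F\left(-n,n,\tfrac12,x\right)$ directly as the paper does; that variant is equivalent and in fact handles half-integer $k$ more uniformly. But the one step where you announced care is exactly where you go wrong: on the support of $\left(\cosh^2(t/2)-\cosh^2(\rho(z,z')/2)\right)_+^{-1/2}$ one has $\cosh^2(t/2)>\cosh^2(\rho/2)$ --- the kernel lives \emph{inside} the light cone $\rho(z,z')<t$, consistently with the integration domain in Theorem~\ref{thm4} --- so the ratio $u=\cosh(t/2)/\cosh(\rho/2)$ satisfies $u>1$, not $0\le u\le 1$ as you claim. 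Hence the hypergeometric argument $1-u^2$ is \emph{negative}, there is no real $\theta\in[0,\tfrac{\pi}{2}]$ with $\cos\theta=u$, and your justification that ``$\theta$ is real'' fails at every point where the kernel is nonzero.

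The conclusion survives, but the justification must be analytic continuation rather than a real substitution: for $u\ge 1$ one has $\arccos u=i\,\mathrm{arccosh}\,u$, so $\cos\left(2|k|\arccos u\right)=\cosh\left(2|k|\,\mathrm{arccosh}\,u\right)$ is real, and the identity $F\left(a,-a,\tfrac12,-s^2\right)=\cosh\left(2a\,\mathrm{arcsinh}\,s\right)$ holds for $s\in\R$ because both sides are analytic in the argument on $(-\infty,1)$ and agree on $[0,1)$; this is also the only way statement i) itself can be read, since its $\arccos$ has argument exceeding $1$. For ii) the repair is automatic: $T_{2|k|}$ is a polynomial and $T_n(\cosh\eta)=\cosh(n\eta)$, so $T_{2|k|}(u)=\cosh\left(2|k|\,\mathrm{arccosh}\,u\right)$ for $u\ge 1$ and \eqref{chebichev} follows whenever $2|k|\in\N$. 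With the branch discussion corrected along these lines, your proof is complete and coincides with the paper's up to the trivially different derivation of ii).
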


\begin{proof}

To see  i) we use \cite{MAGN} $p.39$ 
 $\cos \alpha z =F(\frac{\alpha}{2}, -\frac{\alpha}{2},  \frac{1}{2}, \sin^2 z)$ with
$z=arc \cos x$,  
and by \cite{MAGN}  $p.39$ 
 $T_n(1-2x)=F(-n, n,  \frac{1}{2}, x)$ 
we have ii).
\end{proof}
\section{Explicit solutions for the wave equation associated to the Schr\"odinger equation with the  Morse potential on $\R$}
The Schr\"odinger operator with the Morse potential is given for $X\in \R$ by
\begin{align}\Lambda^{\lambda, k}_X=\frac{\partial^2}{\partial X^2}-2k\lambda e^X -\lambda^2 e^{2X}\end{align}
or equivalently for $y \in \R^+$ by
\begin{align}\Lambda^{\lambda, k}_{\ln y}=\left(y\frac{\partial}{\partial y}\right)^{2}-2k\lambda y -\lambda^2 y^2\end{align}

\begin{prop}\label{prop5} 
i) The modified Schr\"odinger operator with magnetic field $\widetilde{{\cal D}}_{k}$ on the hyperbolic half plane  and  Schr\"odinger operator with the Morse diatomic molecular potential $\Lambda^{\lambda, k}_{\ln y}$ on the real line $\R$  are connected via the formulas
\begin{align}{\cal F}_x\left[y^{-1/2}{\cal D}^z_{k}y^{1/2}\Phi\right](\lambda, y)=\Lambda^{\lambda, k}_{\ln y}\left({\cal F}\Phi\right)(\lambda, y)\end{align}
where the Fourier transform is given by
\begin{align}
[{\cal F}f](\xi)=\frac{1}{\sqrt{2\pi}}\int_{\R}e^{-i x \xi}f(x)dx
\end{align}
ii) The wave kernels with the Morse potential $W_{\lambda, k}(t, y, y')$  is connected to the wave kernel with magnetic potential on the hyperbolic half plane $V_k(t, z, z')$ via the formula
\begin{align}\label{connection}W_{\lambda,  k}(t, y, y')= \frac{1}{\sqrt{y y'}}\int_{-\infty}^{\infty}e^{-i\lambda(x-x')}\widetilde{V}_k(t, z, z')d(x-x')\end{align}
\end{prop}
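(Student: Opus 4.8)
The plan is to prove i) by a direct operator computation and then bootstrap ii) from it by a Fourier-mode (separation-of-variables) argument. For i), I would conjugate $\widetilde{{\cal D}}_k$ by $y^{\pm 1/2}$ and then transform in $x$. The only term that changes under the conjugation is the radial one: using
\[ y^{-1/2}\,y^{2}\frac{\partial^{2}}{\partial y^{2}}\,y^{1/2}=\left(y\frac{\partial}{\partial y}\right)^{2}-\frac14, \]
the constant $-\tfrac14$ produced here cancels the $+\tfrac14$ sitting in $\widetilde{{\cal D}}_k$, while $y^{2}\partial_x^{2}$ and $2iky\partial_x$ commute with multiplication by $y^{1/2}$ and survive unchanged. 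Hence $y^{-1/2}\widetilde{{\cal D}}_k y^{1/2}=y^{2}\partial_x^{2}+(y\partial_y)^{2}+2iky\partial_x$. Applying ${\cal F}_x$, which in the chosen normalization replaces $\partial_x^{2}$ by $-\lambda^{2}$ and $\partial_x$ by $i\lambda$, turns the right-hand side into $(y\partial_y)^{2}-2k\lambda y-\lambda^{2}y^{2}=\Lambda^{\lambda,k}_{\ln y}$, which is exactly i).

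For ii), the organizing remark is that, by i), the separated ansatz $\widetilde{u}(t,x,y)=y^{1/2}e^{i\lambda x}w(t,y)$ carries the half-plane wave equation $\widetilde{{\cal D}}_k\widetilde{u}=\partial_t^{2}\widetilde{u}$ into the Morse wave equation $\Lambda^{\lambda,k}_{\ln y}w=\partial_t^{2}w$. I would first record that the kernel $\widetilde{V}_k$ of Theorem~\ref{thm4} depends on $x,x'$ only through $x-x'$: both $\rho(z,z')$ through \eqref{dist2} and the phase $\left(\frac{\overline z-z'}{\overline{z'}-z}\right)^{k}$ are functions of $(x-x',y,y')$ alone. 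Then I would substitute the plane-wave datum $\widetilde{u}_1(z')=y'^{1/2}e^{i\lambda x'}w_1(y')$ into the representation of Theorem~\ref{thm4}; since the solution is forced into the separated form above, dividing by $y^{1/2}e^{i\lambda x}$ and carrying out the $x'$-integral produces precisely $\int e^{-i\lambda(x-x')}\widetilde{V}_k\,d(x-x')$, and collecting the powers of $y,y'$ against the measure $dx'\,dy'/y'^{2}$ yields $w(t,y)=\int_0^{\infty}W_{\lambda,k}(t,y,y')\,w_1(y')\,dy'/y'$ with $W_{\lambda,k}$ as in \eqref{connection}.

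A cleaner, fully rigorous alternative is to verify directly that the right-hand side of \eqref{connection} is the Morse wave kernel. Writing it as $\tfrac1{\sqrt{y'}}\,{\cal F}_x\!\left[y^{-1/2}\widetilde{V}_k\right]$ up to the inert factor $\sqrt{2\pi}\,e^{i\lambda x'}$, I would apply i) with $\Phi=y^{-1/2}\widetilde{V}_k$ to get $\Lambda^{\lambda,k}_{\ln y}\,{\cal F}_x[y^{-1/2}\widetilde{V}_k]={\cal F}_x[y^{-1/2}\widetilde{{\cal D}}_k\widetilde{V}_k]$; since $\widetilde{{\cal D}}_k^{z}\widetilde{V}_k=\partial_t^{2}\widetilde{V}_k$ by Theorem~\ref{thm4}, the right member equals $\partial_t^{2}$ of the same expression, so $\Lambda^{\lambda,k}_{\ln y}W_{\lambda,k}=\partial_t^{2}W_{\lambda,k}$. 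The initial conditions of \eqref{cauchy-M} would then be transplanted from Lemma~\ref{lem2} through the Fourier transform.

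\textbf{The main obstacle} is that the plane-wave datum $y'^{1/2}e^{i\lambda x'}w_1(y')$ is not in $C_0^{\infty}(\H)$, so Theorem~\ref{thm4} does not apply verbatim and the $x'$-integral needs a separate justification. I expect this to be dispatched by the finite propagation speed encoded in the kernel: the factor $\left(\cosh^{2}(t/2)-\cosh^{2}(\rho/2)\right)_+^{-1/2}$ confines $\widetilde{V}_k$ to $\cosh^{2}(\rho/2)\le\cosh^{2}(t/2)$, and since $\cosh^{2}(\rho/2)\sim (x-x')^{2}/(4yy')$ as $|x-x'|\to\infty$ by \eqref{dist2}, for fixed $t,y,y'$ the integrand in \eqref{connection} has bounded $(x-x')$-support with only an integrable boundary singularity. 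Hence the Fourier integral converges absolutely and the formal separation is legitimate; the last point requiring care is the rigorous passage of the initial conditions, for which the compactly supported computation of Lemma~\ref{lem2} adapts directly.
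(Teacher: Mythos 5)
Your proposal is correct, and your main route for ii) is essentially the paper's own: the paper likewise transplants the Morse problem \eqref{cauchy-M} into the half-plane problem \eqref{cauchy-H} via the substitution $v=y^{1/2}{\cal F}^{-1}_{\lambda}[w]$, applies the representation \eqref{u} of Theorem~\ref{thm4}, and exploits the fact that $\widetilde{V}_k$ depends on $x,x'$ only through $x-x'$ (both the phase and $\rho(z,z')$ via \eqref{dist2}, exactly as you note) to turn the $x'$-integral into a convolution, reading off \eqref{connection} from the convolution theorem together with injectivity of ${\cal F}$; your single-mode ansatz $y^{1/2}e^{i\lambda x}w(t,y)$ is just the pointwise-in-$\lambda$ version of that computation, and your bookkeeping correctly produces the measure $dy'/y'$ of \eqref{sol-cauchy-M}, where the paper's intermediate displays in fact drop a factor $y'^{-1}$ (they show $y'^{-1/2}$ where the datum $y'^{1/2}{\cal F}^{-1}[w_1]$ against $d\mu(z')=y'^{-2}dx'dy'$ gives $y'^{-3/2}$) that only gets restored in the theorem statement. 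What you supply beyond the paper is worth keeping: the conjugation identity $y^{-1/2}y^{2}\partial_y^{2}\,y^{1/2}=(y\partial_y)^{2}-\tfrac14$ proving i), which the paper declares ``simple'' and omits entirely; the direct-verification alternative $\Lambda^{\lambda,k}_{\ln y}W_{\lambda,k}={\cal F}_x\bigl[y^{-1/2}\widetilde{{\cal D}}_k\widetilde{V}_k\bigr]$ (up to the inert factors) $=\partial_t^{2}W_{\lambda,k}$, which decouples the kernel equation from the solution formula; and, most valuably, the finite-propagation argument legitimizing the application of Theorem~\ref{thm4} to the non-compactly-supported plane-wave datum --- the support condition $(x-x')^{2}\le 4yy'\cosh^{2}(t/2)-(y+y')^{2}$ bounds the $(x-x')$-support exactly, leaving only the integrable $(\cdot)_+^{-1/2}$ edge singularity, so the integral in \eqref{connection} converges absolutely --- a point the paper passes over in silence.
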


\begin{proof} The part i) is simple and in consequance is left to the reader. To see ii)
by using i) the Cauchy problem for the wave equation with the Morse potential \eqref{cauchy-M} is transformed into
the Cauchy problem for the wave equation associated to the  modified Schr\"odinger operator with magnetic field on the hyperbolic half plane \eqref{cauchy-H} with $v(x, y)= y^{1/2} {\cal F}^{-1}[w(t, y, \lambda)](x)$
using the the formula \eqref{u} by virtue of injectivity of the Fourier transform we obtain\\
$y^{1/2} {\cal F}_\lambda^{-1}[w(t, y, \lambda)](x) =$
\begin{align} \int_0^\infty y'^{-1/2}\int_{\R}\widetilde{V}_k(t, z,   z') {\cal F}_{\lambda'}^{-1}[w_1(y', \lambda')](x') dy' dx'\end{align}
$y^{1/2} {\cal F}_\lambda^{-1}[w(t, y, \lambda)](x)=$
 \begin{align} \int_0^\infty y'^{-1/2}\left\{\widetilde{V}_k(t, z,   z')* {\cal F}^{-1}_{\lambda'}[w_1(y', \lambda')]\right\}(x)(\lambda) dy'\end{align}
\begin{align} w(\lambda, y, t)= \sqrt{2\pi}\int_0^\infty(y y')^{-1/2} {\cal F}_{x-x'}[\widetilde{V}_k(t, z,   z')](\lambda)w_1(y', \lambda) dy'\end{align}
that is\\
$w(\lambda, y, t)=$
\begin{align} \int_{0}^{\infty }\left[\frac{1}{ \sqrt{y y'}}\int_{-\infty}^{\infty}  e^{-i\lambda(x-x')}\widetilde{V}_k(t, z,   z') d(x-x')\right]   w_1(y', \lambda)  d y' \end{align}
which gives the formula \eqref{connection}.
\end{proof}
\begin{lem}\label{key-lem} Let
\begin{align}I_\lambda^{\alpha,\beta}(Y, Z)=\int_{-\infty}^\infty e^{-i\lambda x}(x+Y)^\alpha(Z^2-x^2)^\beta_+dx\end{align}
then we have\\
\begin{align}I_\lambda^{\alpha,\beta}(Y, Z)=\frac{[\Gamma(\beta+1)]^2}{\Gamma(2\beta+2)}(2Z)^{2\beta+1}(Y+Z)^\alpha e^{-i\lambda Y}\nonumber \\
\phi_1(\beta+1, -\alpha, 2\beta+2, 2i\lambda Y, \frac{2Y}{Y+Z})\end{align}
with the function $\Phi_1(a, b, c,  x, y)$ is the confluent hypergeometric function of two variables defined by the double series:(see for example \cite{ERD-TRIC} p. $225$).\\
\begin{align}\label{phi1}\Phi_1(a, b, c,  x, y)=\sum_{m, n \geq 0}\frac{(a)_{m+n}(b)_n}{c)_{m+n}m!n!}x^my^n\end{align}
 for $|y|<1$ and its analytic continuation elsewhere.
\end{lem}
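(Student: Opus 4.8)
The plan is to reduce the stated oscillatory integral to the classical Euler--type integral representation of the function $\Phi_1$. First I would note that the cut power $(Z^2-x^2)^\beta_+$ confines the integration to the compact interval $[-Z,Z]$ (taking $Z>0$), so that
\begin{align}
I_\lambda^{\alpha,\beta}(Y,Z)=\int_{-Z}^{Z}e^{-i\lambda x}(x+Y)^\alpha(Z^2-x^2)^\beta\,dx .
\end{align}
Since $(Z^2-x^2)^\beta=(Z-x)^\beta(Z+x)^\beta$ vanishes to order $\beta$ at both endpoints, the natural change of variable sends $[-Z,Z]$ onto $[0,1]$ with the endpoints going to $t=0$ and $t=1$: I would set $x=Z(1-2t)$, so that $dx=-2Z\,dt$, $Z-x=2Zt$, $Z+x=2Z(1-t)$, whence $(Z^2-x^2)^\beta=(2Z)^{2\beta}t^\beta(1-t)^\beta$, while $x+Y=(Y+Z)\bigl(1-\tfrac{2Z}{Y+Z}t\bigr)$ and $e^{-i\lambda x}=e^{-i\lambda Z}e^{2i\lambda Zt}$. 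Collecting constants turns the integral into
\begin{align}
I_\lambda^{\alpha,\beta}(Y,Z)=(2Z)^{2\beta+1}(Y+Z)^\alpha e^{-i\lambda Z}\int_0^1 t^\beta(1-t)^\beta\Bigl(1-\tfrac{2Z}{Y+Z}t\Bigr)^{\alpha}e^{2i\lambda Zt}\,dt .
\end{align}

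The second step is to recognise the remaining integral. For $\operatorname{Re}c>\operatorname{Re}a>0$ one has the Euler representation (see \cite{ERD-TRIC})
\begin{align}
\Phi_1(a,b,c,\xi,\eta)=\frac{\Gamma(c)}{\Gamma(a)\Gamma(c-a)}\int_0^1 t^{a-1}(1-t)^{c-a-1}(1-\eta t)^{-b}e^{\xi t}\,dt ,
\end{align}
which one verifies by expanding $(1-\eta t)^{-b}$ and $e^{\xi t}$ in power series and integrating term by term with the Beta integral, thereby reproducing the double series \eqref{phi1}. Matching parameters gives $a=\beta+1$, $c=2\beta+2$ (so that $a-1=c-a-1=\beta$), $b=-\alpha$, $\xi=2i\lambda Z$ and $\eta=\tfrac{2Z}{Y+Z}$; since $\Gamma(a)\Gamma(c-a)/\Gamma(c)=[\Gamma(\beta+1)]^2/\Gamma(2\beta+2)$, substituting back reproduces the announced prefactor $\tfrac{[\Gamma(\beta+1)]^2}{\Gamma(2\beta+2)}(2Z)^{2\beta+1}(Y+Z)^\alpha$ times a $\Phi_1$.

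The change of variable and the term-by-term integration are routine; the genuinely delicate point, and the main obstacle, is the bookkeeping of $Y$ versus $Z$ inside the transcendental factor. The substitution unavoidably places the half-width $Z$ — not $Y$ — in the prefactor $e^{-i\lambda Z}$ and in both arguments $2i\lambda Z$ and $\tfrac{2Z}{Y+Z}$ of $\Phi_1$, and I would pin this down with the limiting case $\alpha=0$: there only the $n=0$ term of the $(-\alpha)_n$ series survives, $\Phi_1$ collapses to ${}_1F_1(\beta+1;2\beta+2;2i\lambda Z)$, and the Kummer--Bessel identity together with the duplication formula for $\Gamma(2\beta+2)$ returns the classical Poisson formula $\int_{-Z}^{Z}e^{-i\lambda x}(Z^2-x^2)^\beta\,dx=\sqrt{\pi}\,\Gamma(\beta+1)\,Z^{2\beta+1}(2/\lambda Z)^{\beta+1/2}J_{\beta+1/2}(\lambda Z)$, with every constant in agreement. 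Thus the $\Gamma$-prefactor comes out exactly as stated, whereas by this derivation the three $Y$'s occurring in $e^{-i\lambda Y}$, in $2i\lambda Y$, and in $\tfrac{2Y}{Y+Z}$ should read $Z$; if a representation truly in terms of $Y$ is desired, one must invoke the Kummer--type transformation $\Phi_1(a,b,c,\xi,\eta)=e^{\xi}(1-\eta)^{-b}\,\Phi_1\bigl(c-a,b,c,-\xi,\tfrac{\eta}{\eta-1}\bigr)$ — itself the image of the Euler integral under $t\mapsto 1-t$ — and carefully track the induced change of the algebraic prefactor. I would therefore keep the parameter matching fully explicit throughout, since that is where sign and $Y/Z$ slips enter.
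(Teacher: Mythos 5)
Your proposal is correct and is essentially the paper's own proof: the author makes the same change of variables (in two steps, $x=Zs$, $s=1-2\xi$, which is exactly your $x=Z(1-2t)$) and then invokes the same Euler integral representation $\Phi_1(a,b,c,x,y)=\frac{\Gamma(c)}{\Gamma(a)\Gamma(c-a)}\int_0^1u^{a-1}(1-u)^{c-a-1}(1-uy)^{-b}e^{ux}du$ with the identical parameter matching $a=\beta+1$, $b=-\alpha$, $c=2\beta+2$. Your flag about the $Y$/$Z$ bookkeeping is also right and is the valuable observation here: the lemma as printed is a misprint, since the paper's own intermediate formula ends with $e^{-i\lambda Z}$ and the factor $\bigl(1-\frac{2Z}{Y+Z}\xi\bigr)^\alpha$, yielding $e^{-i\lambda Z}\Phi_1\bigl(\beta+1,-\alpha,2\beta+2,2i\lambda Z,\frac{2Z}{Y+Z}\bigr)$, and Theorem~\ref{thm-sol-cauchy-M} subsequently applies the lemma in precisely that corrected $Z$-form ($e^{-i\lambda Z}$, $2i\lambda Z$, $\frac{2Z}{Z+Y}$), in agreement with your $\alpha=0$ Poisson--Bessel consistency check.
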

\begin{proof} Set $x=Zs$ and $s=1-2\xi$ we get
\begin{align}I_\lambda^{\alpha,\beta}(Y, Z)=\frac{[\Gamma(\beta+1)]^2}{\Gamma(2\beta+2)}(2 Z)^{2\beta+1}(Y+ Z)^\alpha e^{-i\lambda Z}\nonumber \\
\int_0^1e^{2i\lambda Z\xi}\xi^\beta(1-\xi)^\beta\left(1-\frac{2 Z}{Y+ Z}\xi\right)^\alpha d\xi\end{align}
Using the fact that for $0< \Re \alpha <\Re\gamma$ the function $\Phi_1$ has the integral representation:\\
 $\Phi_1(a, b, c, x, y)=\frac{\Gamma(c)}{\Gamma(a)\Gamma(c-a)}\int_0^1u^{a-1}(1-u)^{c-a-1}(1-uy)^{-b}e^{ux}du$\\
 we see the result of Lemma \ref{key-lem}.\\
\end{proof}
\begin{thm}\label{thm-sol-cauchy-M} For $k$ integer or half integer the Cauchy problem \eqref{cauchy-M} for the wave equation with
the Morse potential has the unique solution given by:\\
\begin{align}\label{sol-cauchy-M}w(t, y)=\int_{|\ln y-\ln y'|<t}W_{\lambda, k}(t, y, y')w_1(y')\frac{dy'}{y'}\end{align}
with
\begin{align}W_{\lambda, k}(t, y, y')=C_{k}(4 y y')^{-|k|}\left(\frac{\partial}{\sinh (t /2) \partial t}\right)^{2|k|}
\nonumber\\ \frac{(2Z)^{4|k|}}{(Z+Y)^{2|k|}} e^{-i\lambda Z}\Phi_1(2|k|+1/2, 2|k|, 4|k|+1, 2i\lambda Z, \frac{2 Z}{Z+Y})\end{align}
 $C_k=\frac{(-1)^{k}\Gamma(2|k|+1/2)}{2\Gamma(4|k|+1)\sqrt{\pi}}$,
 $Z=\sqrt{4y y'\cosh^2 (t/2)-(y+y')^2}$ \\ and
$Y=isign(k)(y+y')$
and as usual, the function $\Phi_1(a, b, c, x, y)$ is the confluent hypergeometric function of two variables defined by the double series \eqref{phi1}
\end{thm}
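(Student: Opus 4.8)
The plan is to feed the explicit half-plane kernel into the connection formula \eqref{connection} and then recognize the resulting one-dimensional integral as an instance of Lemma~\ref{key-lem}, with the Chebyshev structure absorbed into the operator $D=\frac{1}{\sinh(t/2)}\frac{\partial}{\partial t}$. First I would restrict to $k\in\frac12\Z$, which is exactly the range in which the half-plane kernel admits the closed Chebyshev form \eqref{chebichev}. Writing $u=x-x'$ and $s=y+y'$ and substituting \eqref{chebichev} into \eqref{connection}, I use the distance formula \eqref{dist2} to put all the geometry in terms of
\[
Z=\sqrt{4yy'\cosh^2(t/2)-(y+y')^2},\qquad Q=2\sqrt{yy'}\cosh(t/2),\qquad P=\sqrt{u^2+s^2}.
\]
Then $\cosh^2(t/2)-\cosh^2(\rho/2)=(Z^2-u^2)/4yy'$, so the singular factor becomes $(4yy')^{1/2}(Z^2-u^2)_+^{-1/2}$, while the Chebyshev argument is $\cosh(t/2)/\cosh(\rho/2)=Q/P$. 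With $Y=i\,\mathrm{sign}(k)(y+y')$ one has the algebraic identity $P^2=(u-Y)(u+Y)$, and the phase $\left(\frac{\overline z-z'}{\overline{z'}-z}\right)^k=(-1)^k\left(\frac{u-Y}{u+Y}\right)^{|k|}$ will, once combined with the $P$-powers produced below, collapse to the single power $(u+Y)^{-2|k|}$ that Lemma~\ref{key-lem} requires.

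The heart of the argument is an operational identity. A direct computation gives the clean rules $DQ=\sqrt{yy'}$ and $D(Z^2-u^2)=2\sqrt{yy'}\,Q$, together with the relation $Q^2=(Z^2-u^2)+P^2$. Iterating these I expect to establish a Rodrigues-type formula
\[
D^{2|k|}(Z^2-u^2)_+^{\,2|k|-1/2}=c_k\,(yy')^{|k|}\,P^{2|k|}\,(Z^2-u^2)_+^{-1/2}\,T_{2|k|}(Q/P),
\]
with an explicit numerical constant $c_k$ (for instance $c_1=3$), which reproduces exactly the $t$-dependence of \eqref{chebichev}. Feeding this back and using $P^{2}=(u-Y)(u+Y)$ to cancel the phase against the $P^{2|k|}$ factor rewrites the integrand of \eqref{connection} as a constant times $(u+Y)^{-2|k|}\,D^{2|k|}(Z^2-u^2)_+^{\,2|k|-1/2}$.

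Because $D^{2|k|}$ differentiates only in $t$, it commutes with the $u$-integration, so I would pull it outside and be left with the Fourier integral $\int_{-\infty}^\infty e^{-i\lambda u}(u+Y)^{-2|k|}(Z^2-u^2)_+^{\,2|k|-1/2}\,du=I_\lambda^{-2|k|,\,2|k|-1/2}(Y,Z)$. Lemma~\ref{key-lem} evaluates this in closed form and produces precisely the bracket $\frac{(2Z)^{4|k|}}{(Z+Y)^{2|k|}}e^{-i\lambda Z}\Phi_1\!\left(2|k|+\tfrac12,2|k|,4|k|+1,2i\lambda Z,\tfrac{2Z}{Z+Y}\right)$. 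Reassembling, and pushing all the $\Gamma$-factors from Lemma~\ref{key-lem} and the constant $c_k$ into the single normalizing constant $C_k$, yields the claimed formula for $W_{\lambda,k}$. Finally, that \eqref{sol-cauchy-M} really solves the Cauchy problem \eqref{cauchy-M} with $w(0,y)=0$, $w_t(0,y)=w_1(y)$, and does so uniquely, I would transport from Theorem~\ref{thm4} through the injective Fourier correspondence of Proposition~\ref{prop5}, under which the Morse Cauchy data and the half-plane Cauchy data are interchanged.

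The step I expect to be the main obstacle is the Rodrigues identity together with the simultaneous bookkeeping of constants: verifying that iterating $\frac{1}{\sinh(t/2)}\partial_t$ on the half-integer power $(Z^2-u^2)^{2|k|-1/2}$ regenerates the Chebyshev polynomial $T_{2|k|}$ with the correct powers of $Q$ and $P$, and then confirming that the numerical factor $c_k$, the $\Gamma$-factors of Lemma~\ref{key-lem}, and the powers of $4yy'$ assemble exactly into $C_k=\frac{(-1)^{k}\Gamma(2|k|+1/2)}{2\Gamma(4|k|+1)\sqrt{\pi}}$. Keeping this normalization straight is the delicate part, and I would treat the final power of $\sqrt{yy'}$ as fixed by matching the $|k|=\tfrac12$ and $|k|=1$ cases directly.
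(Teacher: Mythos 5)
Your proposal is correct and follows the paper's proof essentially step for step: the paper likewise combines the Chebyshev form \eqref{chebichev} of $\widetilde{V}_k$ with the Rodrigues formula for $T_{2|k|}$ (its identity \eqref{chebichev-rod} is precisely your operational identity rewritten in the variables $Z$, $Q$, $P$, and your normalization check $c_1=3$ is consistent), uses the phase reduction \eqref{dist-2}, then pulls $\left(\frac{1}{\sinh(t/2)}\frac{\partial}{\partial t}\right)^{2|k|}$ outside the Fourier integral of \eqref{connection} and evaluates the remaining integral by Lemma~\ref{key-lem} with $\alpha=-2|k|$, $\beta=2|k|-\tfrac12$. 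The concluding transfer of the initial conditions and uniqueness from Theorem~\ref{thm4} through the Fourier correspondence of Proposition~\ref{prop5} is also exactly the paper's route.
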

\begin{proof}From the formula \eqref{dist2} we can write
\begin{align}\label{dist-2}\left(\frac{\overline{z}-z'}{\overline{z'}-z}\right)^k\cosh^{-2|k|}(\rho(z, z')/2)=\nonumber \\ (-1)^{-k}(4y y')^{|k|}\left[x-x'+isign(k)(y+y')\right]^{-2|k|}\end{align}
From the Rodrigue formula for the Chebichev polynomials (\cite{MAGN} p.258)\\ 
$T_n(x)=a_n(1-x^2)^{1/2}\left(\frac{d}{dx}\right)^n[(1-x^2)^{n-1/2}]$
where $a_n={(-1)^n\sqrt{\pi}\over 2^{n+1}\Gamma(n+1/2)}$,
 and the formula \eqref{chebichev} we can write\\
\begin{align}\label{chebichev-rod} T_{2|k|}\left(\frac{\cosh (t/2)}{\cosh(\rho/2)}\right)=\frac{\sqrt{\pi}}{2\Gamma(2|k|+1/2)}
\left(\cosh^2(t/2)-\cosh^2(\rho/2) \right)^{1/2}\nonumber\\ \cosh^{-2|k|} (\rho/2) \left(\frac{\partial}{\sinh(t/2)\partial t}\right)^{2|k|}\left(\cosh^2(t/2)-\cosh^2(\rho/2) \right)^{2|k|-1/2} \end{align}
From the formulas \eqref{chebichev}, \eqref{dist-2} and \eqref{chebichev-rod}we have\\
\begin{align}
V_k(t, z, z')=c_k\left(\frac{\partial}{\sinh (t/2)\partial t }\right)^{2|k|}(4 y y')^{-|k|+1/2}\nonumber\\
[x-x'+isign(k)(y+y')]^{-2|k|}\nonumber\\
[4yy'\cosh^2 (t/2)-(y+y')^2-(x-x')^2]_+^{2|k|-1/2}\end{align}
where $c_k=\frac{(-1)^{-k}}{4\sqrt{\pi}\Gamma(2|k|+1/2)}$\\  Using ii) of Proposition \ref{prop5} we obtain\\
    $W_{\lambda, k}(t, y, y')=c_k(y y')^{-|k|}\left(\frac{\partial}{\sinh t/2  \partial t}\right)^{2|k|}
    \int^\infty_{-\infty}e^{-i\lambda x}[x+isign{k}(y+ y')]^{-2|k|}\times$\\  $[4 y y'\cosh^2 t-(y+y')^2-x^2]_+^{2|k|-1/2}dx$,
 \\ and by Lemma \ref{key-lem} we get the result Theorem \eqref{thm-sol-cauchy-M}
\end{proof}

Note that using the formulas: 
$\phi_1(\alpha,0,\gamma,x,y)=F(\alpha,\gamma,x)$ and \cite{MAGN} p. 283
$F(\nu+1/2,2\nu+1,2iz)=\Gamma(1+\nu)e^{iz}(z/2)^{-\nu}J_{\nu}(z)$
we get by taking $k=0$ in\eqref{sol-cauchy-M} the solution of the wave equation with Morse(Liouville) potential \cite{A-B-M}

\begin{align}W_{\lambda, 0}=
\frac{1}{2}J_0(|\lambda|\sqrt{2e^{X+X'}(\cosh t-\cosh (X-X'))})
\end{align}


  M.V. Ould Moustapha, \textsc{Department of Mathematic,
 College of Arts and Sciences-Gurayat,
 Jouf University-Kingdom of Saudi Arabia }.\\
\textsc{Facult\'e des Sciences et Techniques
Universit\'e de  Nouakchott Al-asriya,
Nouakchott-Mauritanie.}\\
  \textit{E-mail address}: \texttt{mohamedvall.ouldmoustapha230@gmail.com}

\end{document}